\def\BibTeX{{\rm B\kern-.05em{\sc i\kern-.025em b}\kern-.08em
    T\kern-.1667em\lower.7ex\hbox{E}\kern-.125emX}}
\newtheorem{theorem}{Theorem}
\newtheorem{lemma}{Lemma}
\newtheorem{corollary}{Corollary}
\newtheorem{assumption}{Assumption}
\begin{document}
\title{Infinite Factorial Linear Dynamical Systems for Transient Signal Detection}
\author{Jiadi Bao,~\IEEEmembership{Student Member,~IEEE,}
Yatong Wang,~\IEEEmembership{Member,~IEEE,}
Yunjie Li,~\IEEEmembership{Senior Member,~IEEE,}\\
Mengtao Zhu,~\IEEEmembership{Member,~IEEE,}
and Shafei Wang
        % <-this % stops a space
% \thanks{This work was supported by the National Natural Science Foundation (NSFC) of China under grant no. 62301031, Young Elite Scientists Sponsorship Program by BAST under grant BYESS2023415. \textit{(corresponding author: Yunjie Li)}.}
\thanks{Jiadi Bao, Yatong Wang, Yunjie Li, Mengtao Zhu are with the School of Information and Electronics, Beijing Institute of Technology, Beijing, 100081, China (E-mail: baojiadi@bit.edu.cn; wangyatong@bit.edu.cn, liyunjie@bit.edu.cn; zhumengtao@bit.edu.cn;).}
\thanks{Shafei Wang is with the School of Information and Electronics, Beijing Institute of Technology, Beijing, 100081, China, and also with the Laboratory of Electromagnetic Space Cognition and Intelligent Control, Beijing, 100191, China.}
}

\markboth{Journal of \LaTeX\ Class Files,~Vol.~18, No.~9, April~2024}%
{How to Use the IEEEtran \LaTeX \ Templates}

\maketitle

\begin{abstract}
Accurately detecting the transient signal of interest from the background signal is one of the fundamental tasks in signal processing. The most recent approaches assume the existence of a single background source and represent the background signal using a linear dynamical system (LDS). This assumption might fail to capture the complexities of modern electromagnetic environments with multiple sources. To address this limitation, this paper proposes a method for detecting the transient signal in a background composed of an unknown number of emitters. The proposed method consists of two main tasks. First, a Bayesian nonparametric model called the infinite factorial linear dynamical system (IFLDS) is developed. The developed model is based on the sticky Indian buffet process and enables the representation and parameter learning of the unbounded number of background sources. This study also designs a parameter learning method for the IFLDS using slice sampling and particle Gibbs with ancestor sampling. Second, the finite moving average (FMA) stopping time is introduced to minimize the worst-case probability of missed detection, and the statistical performance of the stopping time is investigated. To facilitate the computation of the FMA stopping time, this study derives the factorial Kalman forward filtering (FKFF) method and designs a dependence structure for the underlying model, allowing the stopping time to be defined by a recursive function. Numerical simulations demonstrate the effectiveness of the proposed method and the validity of the theoretical results. The experimental results of the pulse signal detection under the condition of communication interference confirm the effectiveness and superiority of the proposed method.
\end{abstract}

\begin{IEEEkeywords}
Bayesian nonparametric, factorial hidden Markov model, finite moving average test, linear dynamical system, transient signal detection.
\end{IEEEkeywords}

\section{Introduction}
\IEEEPARstart{M}{odern} electromagnetic environments have become highly complicated and characterized by an increasing number of radiation sources and enhanced dynamic properties, which pose a great challenge to signal detection. This study considers the problem of detecting the transient signal of interest (SOI) from complicated background signals (CBS). The transient signal refers to a signal that begins at an unknown time and has a finite duration. The CBS consists of signals generated by multiple radiation sources\footnote{Signal detection aims to identify the \textit{signal of interest} (emitted by the \textit{source of interest}) from the \textit{background signal} (transmitted by the \textit{background source}).} and environmental noise. We refer to background signal (BS) as the case exist a single background source, distinguishing it from the term CBS.

Traditional signal detection methods, including energy detection~\cite{urkowitz1967energy} and the quickest change detection~\cite{tartakovsky2014sequential}, mostly rely on two assumptions: 1) the BS samples are independent and identically distributed (i.i.d.), and 2) the SOI has infinite duration. However, the i.i.d. background models often fail to represent the time-correlated CBS, which increases the probability of a false alarm and the burden on a downstream signal processing procedure. In addition, when the signal duration is finite, particularly if it is short, the existing detection methods would result in missed detections. Therefore, it is essential to develop a new model that can effectively represent the CBS and define a stopping time using a recursive function.

From the perspective of CBS modeling, it is necessary to alleviate the restriction of the i.i.d. assumption of the signal samples to prevent it from failing to capture the commonly encountered time-correlated BS \cite{sadler1996detection, yang2004memoryless, Kay1993, poor1979memoryless}. Recently, with the development of machine learning, linear dynamical system (LDS)~\cite{ford2019unknown}, hidden Markov model (HMM)~\cite{ford2021unknown}, and switching linear dynamical systems~\cite{ford2023unknown}, have been developed to represent the BS. However, these models assume that a single background source exists, which limits their representation capability in scenarios with multiple background sources. Bayesian non-parametric (BNP) prior offers a potential solution to improve the CBS representation by allowing an infinite number of background sources. The Markov Indian buffet process (MIBP)~\cite{griffiths2011indian}, \cite{teh2007stick} is a common choice to construct the BNP prior, and it has been introduced to factorial HMM~\cite{ghahramani1995factorial} to describe potentially infinite binary Markov chains~\cite{gael2008infinite}. Later, the infinite factorial dynamic model was developed~\cite{valera2015infinite} to enhance the representational capability of the real-life time series. However, none of these BNP models consider the adaptability of electromagnetic signals. Thus, it is necessary to develop a representation model that can automatically determine the number of background sources.

In terms of defining the stopping time for a signal that has a finite duration, the transient change detection (TCD)~\cite{guepie2012sequential} can provide a promising framework. For a more appropriate terminology, this study will refer to the TCD as transient signal detection (TSD) in the rest of this paper. As discussed in~\cite{guepie2012sequential}, there are two types of TSD problems. The first type involves the detection of suddenly arriving signals with random duration. Specifically, Tartakovsky~\cite{tartakovsky2021optimal} provided an optimal solution when the signal duration is geometrically distributed. The second type of TSD problem relates to safety-critical applications where the minimal signal duration or a tolerable detection delay $w_d$ is pre-specified, and the detection delay is greater than $w_d$ is considered a missed detection. Examples of security-critical systems are navigation monitoring systems~\cite{bakhache2000reliable}, cyber-attack detection systems~\cite{fillatre2017security}, and radar signal detection systems~\cite{zhang2023jdmr}. This study considers the second type of TSD problem. The TSD in the general non-i.i.d. background is challenging due to two computational issues. First, the likelihood computation of the factorial state space model is an NP-hard problem~\cite{ghahramani1995factorial}. In~\cite{ghahramani1995factorial}, the authors used the dynamic programming technique to reduce the computational complexity of factorial HMM. To the best of the authors' knowledge, there has been no research on the likelihood-related computational problem of factorial state-space models besides factorial HMM. Second, the log-likelihood ratio (LLR) of non-i.i.d. models is computationally infeasible because the signal arrival time is unknown. For general non-i.i.d. models, a common approach for computing LLR is to assume that the SOI and the BS are independent~\cite{sun2024quickest,Fuh2015,fuh2020asymptotically}. However, this assumption is unsuitable for the scenario considered in this work because the observations depend on the CBS regardless of whether the signal is present or not. These challenges make the existing TSD strategies and theoretical results not suitable for our scenario.

% The analysis of the optimal strategy is a complicated problem. \cite{egea2018performance}, \cite{guepie2012sequential} turn to provide the statistical bound.

Considering the aforementioned challenges, this study proposes a novel method for transient signal detection in CBS. The proposed method includes two tasks: the CBS Representation and Parameter Learning (RPL) task, and the Transient Signal Detection (TSD) task. The \textbf{RPL} task consists of two components: 1) A model termed the infinite factorial linear dynamical systems (IFLDS) is designed to represent the CBS. In the IFLDS, the sticky MIBP is designed, where sticky control is incorporated into MIBP \cite{griffiths2011indian} to describe the time characteristic of BS. 2) A parameter learning method based on slice sampling (SS) and particle Gibbs with ancestor sampling (PGAS) is developed. The parameter learning method can estimate the parameters of the proposed model including automatically determining the number of background sources. In the \textbf{TSD} task, the finite moving average (FMA) stopping time is introduced. To address the computational problems of the FMA stopping time in the proposed model, this study first proposes a likelihood computation method called factorial Kalman forward filtering (FKFF) inspired by the Kalman Filter \cite{khodarahmi2023review}; then, the LLR is reformulated by designing the dependence structure between the underlying model when SOI is present and absent.

The main contributions of the work can be summarized as follows:
\begin{enumerate}
    \item An IFLDS model is designed to represent CBS, and the sticky control is incorporated to describe the time characteristic of the CBS. The proposed model is fully conjugate and allows for an efficient parameter updating process; 
    \item A parameter learning method based on the SS and PGAS is developed. Such a method can automatically determine the background source number. The simulation results demonstrate the effectiveness and the superiority of the proposed method in terms of performance;
    \item A stopping time is defined based on the FMA test to detect the transient signal. The stopping time is defined by a recursive function, which is suitable for online processing.
    \item The statistical performance of the stopping time is investigated. The relationship between the signal duration and the detection performance is established. The simulations are presented to evaluate the effectiveness of these results.
\end{enumerate}

The remainder of the paper is organized as follows. Section II introduces the formulations of the RPL and the TSD tasks. Section III describes the proposed IFLDS and the corresponding parameter learning method. Section IV defines the FMA stopping time and its statistical performance. Section V presents the simulation design,
results, and discussions. Section VI provides the experimental results of the pulse signal detection in the presence of communication interference. Finally, Section VII concludes this paper.

\section{Problem Formulation}
Assume that multiple sources are recorded by a sequentially observed signal $\boldsymbol{p}=\{\boldsymbol{p}_t\}_{t\geq1}$. At an unknown time $\nu$, the SOI arrives and changes the distribution of the observed signal samples. The SOI lasts $w$ samples, so $\{\boldsymbol{p}_t\}_{t=1}^{\nu-1}$ and $\{\boldsymbol{p}_t\}_{t=\nu+w+1}^{T}$ are generated by one stochastic model and $\{\boldsymbol{p}_t\}_{t=\nu}^{\nu+w}$ is generated by another model. Then, the TSD problem can be defined as a binary hypothesis test as follows:
\begin{equation}
    \label{hypothesis test}
    \begin{aligned}
            &\mathcal{H}_0:\boldsymbol{p}_t = \boldsymbol{n}_t, &\text{if $t<\nu$ or $t\geq\nu+w$},\\
            &\mathcal{H}_1:\boldsymbol{p}_t = \boldsymbol{y}_t +\boldsymbol{n}_t, &\text{if $\nu \leq t \leq \nu+w$},
    \end{aligned}
\end{equation}
\noindent where $\boldsymbol{p}_t=[p_{I,t}, p_{Q,t}]^\top$ is the observed complex signal, $\boldsymbol{n}_t = [n_{I,t},n_{Q,t}]^\top$ is a complex CBS that always presents, and $\boldsymbol{y}_t=[y_{I,t}, y_{Q,t}]^\top$ is the complex deterministic SOI.

In the detection scheme, the number of background sources and the corresponding parameters of the CBS are estimated offline. Once the parameters of the CBS are obtained, the signal emitted by the source of interest can be detected in real-time. The described scenario, featuring two background sources, is illustrated in Fig.~\ref{framework}. In the following two subsections, the problems of the RPL and TSD tasks are formulated.
\begin{figure}[!t]

  \centering
  \includegraphics[width=3.6in]{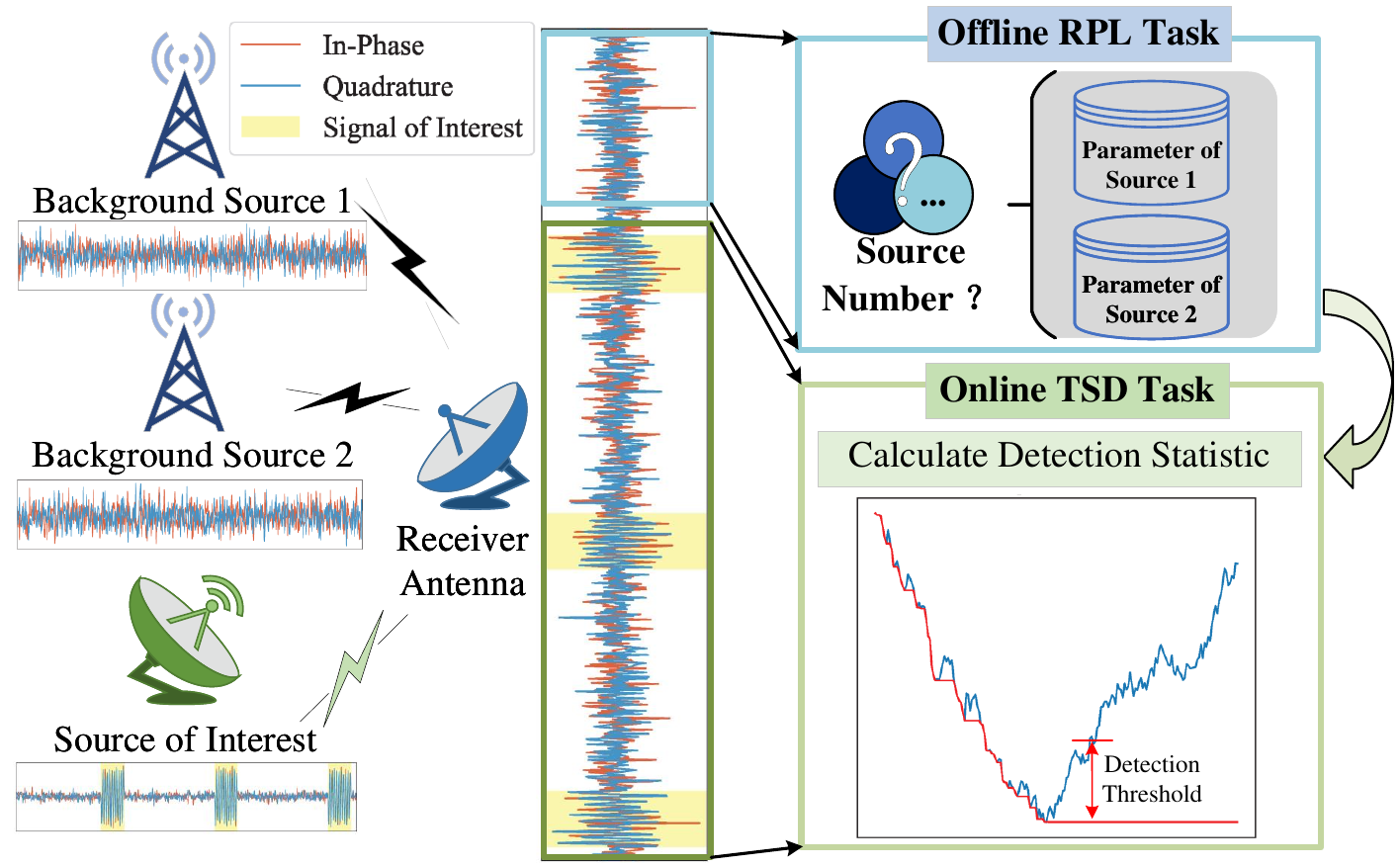}
  \caption{The diagram of the RPL and TSD tasks in the presence of two background sources. The RPL task is performed offline, whereas the TSD task is performed online.
  \label{framework}}
\end{figure}

\subsection{The RPL Task}
In this study, the CBS consists of signals transmitted by $M$ sources. The signal emitted by the $m$th source is generated by an LDS parameterized by $\boldsymbol{\Gamma}^m = \{\boldsymbol{G}^m,\boldsymbol{C}^m, \boldsymbol{R},\boldsymbol{Q}\}, m\in[1,M]$, the observation samples of CBS $\boldsymbol{n}_t$ are drawn from a  probabilistic model described as follows:
\begin{equation}
    \begin{aligned}
    \boldsymbol{x}_1& \sim \mathcal{N}(\boldsymbol{0}, \boldsymbol{Q}),\\
        \boldsymbol{x}_{t+1}^m &= \boldsymbol{G}^m\boldsymbol{x}_{t}^m +\boldsymbol{\omega},\boldsymbol{\omega}\sim \mathcal{N}(\boldsymbol{0},\boldsymbol{Q}),\\
        \boldsymbol{n}_t &= \sum_{m=1}^M(\boldsymbol{C}^m\boldsymbol{x}_t^m + \boldsymbol{v}),\boldsymbol{v}\sim\mathcal{N}(\boldsymbol{0},\boldsymbol{R}),
    \end{aligned}
    \label{FLDS}
\end{equation}
\noindent where $\boldsymbol{G}^m$ and $\boldsymbol{C}^m$ are the transition and output matrices with a dimension $2\times2$ by assumption; $\boldsymbol{\omega}$ and $\boldsymbol{v}$ are zero mean Gaussian random variables with covariance $\boldsymbol{Q}$ and $\boldsymbol{R}$. In this study, this generative model is referred to as the factorial linear dynamical systems (FLDS), and its graphical representation is shown in Fig.~\ref{fig_FLDS}. 
\begin{figure}[!t]

  \centering
  \includegraphics[width=2.5in]{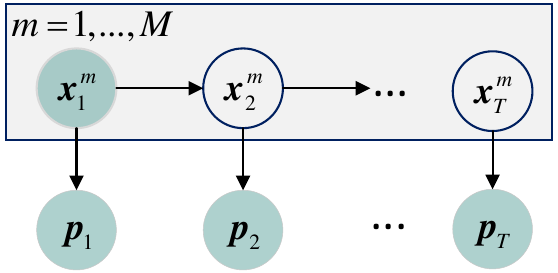}
  \caption{The graphical representation of the FLDS. (The colored nodes denote known variables, and the others represent unknown variables.)
  \label{fig_FLDS}}
\end{figure}

The parameter learning of the FLDS consists of estimating of FLDS parameters $\{\boldsymbol{\Gamma}^m\}_{m=1}^M$ and determining the background source number $M$. The parameter estimation procedure can be formulated as a maximum a posterior problem:
\begin{equation}
    \{\boldsymbol{\Gamma}^m\}_{m=1}^M, M = \arg \max_{\{\boldsymbol{\Gamma}^m\}_{m=1}^M, M} P\big(\{\boldsymbol{\Gamma}^m\}_{m=1}^M,M\mid \{\boldsymbol{n}_t\}_{t=1}^{\bar{T}}\big),
\end{equation}
\noindent where $\bar{T}$ is the length of the CBS used in the RPL task.

\subsection{The TSD Task}
Assuming that all background sources are continuously active and transmitting signals, we focus on detecting an SOI that has a finite duration. Unlike traditional quickest change detection (QCD) problems that focus on minimizing the detection delay, this paper assumes that a tolerable detection delay $w_d$ is equivalent to the signal duration $w$\footnote{The reason for adopting this assumption is that detection with a delay greater than $w_d$ is considered a missing detection, and detection with a delay smaller than $w_d$ is considered less important in terms of the detection performance.}. Next, let $P_\nu$ is the probability measure when $\boldsymbol{p}_1,...,\boldsymbol{p}_{\nu-1}$ and $\boldsymbol{p}_{\nu+m},...,\boldsymbol{p}_T$ are generated by a stochastic model with a marginal probability density function (PDF) $f_0$, corresponding to the null hypothesis (i.e. $\mathcal{H}_0$), on the other hand, $\boldsymbol{p}_\nu,..., \boldsymbol{p}_{\nu+m-1}$ are generated by a stochastic model with PDF of $f_1$ (the signal present case, $\mathcal{H}_1$); $P_\infty$ refers to the case $\nu=\infty$ (i.e., there are no signals). 

The optimization criterion is to minimize the worst-case probability of missed detection:
\begin{equation}
    \inf_{\tau\in C_\alpha}\bigg\{P_{MD}(\tau,w)=\sup_{\nu\geq 1}P_\nu(\tau\geq \nu+w|\tau\geq\nu)\bigg\}
    \label{inf_PMD}.
\end{equation}
where $\inf$ refers to infimum, and $\sup$ is supremum. All stopping times $\tau\in C_\alpha$ satisfy the following condition:
\begin{equation}
    C_{\alpha} = \bigg\{\tau:P_{FA}(\tau,w_\alpha)=\sup_{l\geq 1}P_\infty(l\leq \tau<l+w_\alpha)\leq \tilde{\alpha} \bigg\}
    \label{false alarm constraint},
\end{equation}
\noindent where $P_{FA}$ and $P_{MD}$ denote the worst-case probability of false alarm within any interval with a length of $w_\alpha$ and miss detection, respectively; $l$ is an integer greater than zero; $\tilde{\alpha}\in(0, 1)$ is a pre-defined constant value. 

\section{Representation and Parameter Learning}
This section first describes a conjugate graphical model IFLDS to represent the CBS. Then, the parameter learning algorithm for IFLDS is presented.

\subsection{The Graphical Model}
To estimate the number of background sources, this study uses the LDS~\cite{Bishop2006} and the IFHMM~\cite{gael2008infinite} as two basic building blocks for IFLDS. The graphical representation is shown in Fig.~\ref{sticky_IFLDS}. We present the BNP prior and the joint likelihood function as follows.

\subsubsection{ Bayesian Nonparametric Prior Construction}
In a highly crowded electromagnetic environment, it is common that there exists multiple background sources and the number is unknown. The MIBP~\cite{gael2008infinite} is one of the potential solutions to represent infinite number of background sources. The MIBP places a prior distribution over a binary matrix $\boldsymbol{S}$ with a finite number of rows and an infinite number of columns. In this matrix, the $t$th row represents a time step $t$, and the $m$th column indicates the binary states of the $m$th Markov chain. Each element $s_t^m \in \{0,1\}$ indicates whether the \textit{m}th LDS is active at a time moment $t$, where $t\in[1,\bar{T}]$ is the time index of the Markov chain (i.e., the row number of the $\boldsymbol{S}$), and $m\in[1,M], M\to \infty$ is the source index of the Markov chain. This study assumes $s_0^m=0$ for simplicity. The variable $s_t^m$ evolves according to the transition matrix as follows:
\begin{equation}
    \boldsymbol{A}^m = 
    \begin{pmatrix}
        1-a^m & a^m\\1-b^m & b^m
    \end{pmatrix},
    \label{old transition}
\end{equation}
\noindent where $a^m = P(s_t^m=1|s_{t-1}^m=0)$ and $b^m=P(s_t^m=1|s_{t-1}^m=1)$. 

The MIBP is implemented using the stick-breaking construction \cite{gael2008infinite}, which facilitates simple and efficient inference. Denote $a^{(m)}$ as a sorted value of $a^m$, which satisfies the condition of $a^{(1)} > ...>a^{(m)}>...$, and it holds that:
\begin{equation}
    \begin{aligned}
        a^{(1)}&\sim Beta(\alpha,1),\\
        P(a^{(m)}|a^{(m-1)})&\propto (a^{(m)})^{\alpha-1}\mathbb{I}(0\leq a^{(m)}\leq a^{(m-1)}),
    \end{aligned}
    \label{sampleam}
\end{equation}
\noindent where $Beta(\cdot)$ is the Beta function, $\mathbb{I}(\cdot)$ is the indicator function, and $\alpha$ is the concentration parameter of the BNP prior.  The prior distribution placed over variable $b^m$ is the Beta distribution:
\begin{equation}
    b^m\sim Beta(\beta_0,\beta_1).
\end{equation}
where $\beta_0$ and $\beta_1$ are hyper-parameters. The prior distribution implies that only a finite number of Markov chains is active, whereas the others remain idle. To better describe the temporal characteristic of the continuously active background sources, in the following part, we introduce the sticky control.

In electromagnetic environments, the background sources, such as communication transmitters, typically do not switch on and off frequently within a short time interval. This suggests that the hidden state, denoted by $\boldsymbol{s}_t^m$, tends to persist in the current state rather than to move to other states. To describe such a pattern, this study introduces a sticky control to govern the transition between adjacent hidden states of $\boldsymbol{S}^m$. The sticky control serves as a valuable tool for controlling the Markov process and has been applied in many fields, including speaker diarization~\cite{Fox2011}, signal detection~\cite{ford2023unknown}, and comparative genomic hybridization~\cite{Du2010sticky}. The sticky control provides an effective approach for accurately modeling temporal characteristics. 

In this study, sticky control is implemented by introducing $\bar{T}$ random variables to each Markov chain $z_t^m \in \{0,1\}$. The transition density of a hidden variable $s_t^m$ is conditioned on a sticky variable as follows:
\begin{equation}
\begin{aligned}
P(s_t^m=i|s_{t-1}^m=j,z_t^m,\boldsymbol{A}^m)&=
\begin{cases}
           \boldsymbol{A}^m_{j,i},& \text{if $z_t^m=1$},\\
            \boldsymbol{I}_{j,i}, & \text{if $z_t^m=0$},
\end{cases}
\end{aligned}
\label{new transition}
\end{equation}
\noindent where $\boldsymbol{I}_{j,i}$ is the element in the $j$th row and the $i$th column of the dimensional identity matrix, and $\boldsymbol{A}_{j,i}^m$ is the element in the $j$th row and the $i$th column of the matrix $\boldsymbol{A}^m$.

A sticky variable $z^m_t$ is sampled from the Bernoulli distribution, with a Beta prior distribution parametrized by $\gamma_0$ and $\gamma_1$: 
\begin{equation}
    \begin{aligned}
        P(z^m_t)&= Ber(\gamma^m),\\
        P(\gamma^m)& = Beta(\gamma_0,\gamma_1),
    \end{aligned}
    \label{betaprior}
\end{equation}
where $Ber(\cdot)$ is the Bernoulli distribution.

\subsubsection{Joint Probability Density Function}
This section first defines the notations of random variables as follows: $\boldsymbol{X}=\{\boldsymbol{X}^m\}_{m\in[1,M]}=\{\boldsymbol{x}_t^m\}_{t\in[1,T],m\in[1,M]}, \boldsymbol{P}=\{\boldsymbol{p}_t\}_{t\in[1,T]}, \boldsymbol{A}=\{\boldsymbol{A}^m\}_{m\in[1,M]}, \boldsymbol{\Gamma} = \{\boldsymbol{\Gamma}^m\}_{m\in[1,M]},\boldsymbol{S}=\{\boldsymbol{S}^m\}_{m\in[1,M]}=\{s_t^m\}_{t\in[1,T],m\in[1,M]},\boldsymbol{Z}=\{\boldsymbol{Z}^m\}_{m\in[1,M]}=\{z_t^m\}_{t\in[1,T],m\in[1,M]}$. The $m$th LDS is active at a time $t$ and contributes its value to the observation if and only if $s_t^m$ takes a value of one. Meanwhile, the $m$th latent Markov chain, $\boldsymbol{S}^m$, will persist in its current state if $z_t^m=1$. The joint likelihood function of the IFLDS is defined by:
\begin{equation}
\label{likelihood}
\begin{aligned} &P(\boldsymbol{S},\boldsymbol{X},\boldsymbol{P},\boldsymbol{A},\boldsymbol{Z},\boldsymbol{\Gamma})\propto\prod_{t=1}^{\bar{T}} \prod_{m=1}^{M} P(s_t^m \mid s_{t-1}^m, z_t^m, \boldsymbol{A}^m) \times \\
&\prod_{t=1}^{\bar{T}} \prod_{m=1}^{M} P(\boldsymbol{x}_t^m \mid \boldsymbol{x}_{t-1}^m, \boldsymbol{G}^m, s_t^m) \times \\
&\prod_{t=1}^{\bar{T}} P(\boldsymbol{p}_t \mid \{\boldsymbol{x}_t^m,\boldsymbol{\Gamma}^m\}_{m=1}^M)\times P(\boldsymbol{A})\times P(\boldsymbol{Z})\times P(\boldsymbol{\Gamma}).
\end{aligned}
\end{equation}
The first term of the joint likelihood function \eqref{likelihood} describes the hidden state transition rule of Markov chain defined by~\eqref{new transition}; the second term is the transition rule of the LDS, which satisfies the following expression:
\begin{equation}
    P(\boldsymbol{x}_t^m|\boldsymbol{x}_{t-1}^m,\boldsymbol{G}^m,s_t^m)=
    \begin{cases}
            \mathcal{N}(\boldsymbol{G}^m\boldsymbol{x}_{t-1}^m,\boldsymbol{Q}),& \text{if $s_t^m=1$},\\
            \mathcal{N}(\boldsymbol{x}_{t-1}^m,\boldsymbol{Q}), & \text{if $s_t^m=0$}.
    \end{cases}
    \label{IFLDS}
\end{equation}

The third term of~\eqref{likelihood} describes the observation model; an observation $\boldsymbol{p}_t$ is the sum of the outputs of all LDSs, and it follows $\boldsymbol{p}_t=\mathcal{N}(\sum_{m=1}^M \boldsymbol{C}^m\boldsymbol{x}_t^m,M\boldsymbol{R})$. The fourth and the fifth terms of~\eqref{likelihood} are described by~\eqref{sampleam} and~\eqref{betaprior}, respectively. 

The last term of~\eqref{likelihood} represents the prior distribution of each LDS: $P(\boldsymbol{\Gamma}) = \prod_{m=1}^M P(\boldsymbol{\Gamma}^m)= \prod_{m=1}^M P(\boldsymbol{G}^m, \boldsymbol{Q})P(\boldsymbol{C}^m,\boldsymbol{R})$. To ensure conjugacy, this study places the matrix normal inverse Wishart (MNIW) prior to the LDS transition and output matrices. Particularly, for the latent variables' parameters $\{\boldsymbol{G}^m,\boldsymbol{Q}\}$, the latent relation variable $\boldsymbol{D}^m=\{\boldsymbol{\psi}^m, \bar{\boldsymbol{\psi}}^m\}$ is defined, where $\boldsymbol{\psi}^m_t = \boldsymbol{x}_t^m$ and $\boldsymbol{\bar{\psi}}^m_t = \boldsymbol{x}_{t-1}^m$. The posterior distribution can be expressed using the Bayesian theorem as follows:
\begin{equation}
\begin{aligned}
    P(\boldsymbol{G}^m, \boldsymbol{Q}|\boldsymbol{D}^m) &= P(\boldsymbol{G}^m|\boldsymbol{Q}, \boldsymbol{D}^m)P(\boldsymbol{Q}|\boldsymbol{D}^m),\\
    P(\boldsymbol{G}^m|\boldsymbol{Q}) &= \mathcal{MN}(\boldsymbol{G}^m;\boldsymbol{M}_0,\boldsymbol{Q},\boldsymbol{K}_0),\\
    P(\boldsymbol{Q})&=\mathcal{IW}(n_0,\boldsymbol{S}_0),
\end{aligned}
\label{prioroftransition}
\end{equation}
\noindent where $\mathcal{MN}$ is the matrix normal distribution, $\mathcal{IW}$ is the inverse Wishart distribution, and $\boldsymbol{M}_0,\boldsymbol{K}_0,n_0$, and $\boldsymbol{S}_0$ are the hyper-parameters. The prior distribution of the output parameters $\boldsymbol{C}^m$ and $\boldsymbol{R}$ are normal inverse Wishart distribution with $\boldsymbol{M}_0,\boldsymbol{K}_0,n_0$, and $\boldsymbol{S}_0$ as its hyper-parameters.

% Similarly, for output parameters $\{\boldsymbol{C}^m, \boldsymbol{R}\}$, relation variable $\boldsymbol{L}^m = \{\boldsymbol{\varsigma}^m,\bar{\boldsymbol{\varsigma}}^m\}$ is defined, where $\boldsymbol{\varsigma}^m_t=\boldsymbol{p}_t^m$ and $\bar{\boldsymbol{\varsigma}}^m_t=\boldsymbol{x}_t^m$. The posterior distribution of the output parameter can be expressed by:
% \begin{equation}
%     \begin{aligned}
%         P(\boldsymbol{C}^m, \boldsymbol{R}|\boldsymbol{L}^m) &= P(\boldsymbol{C}^m|\boldsymbol{R}, \boldsymbol{L}^m)P(\boldsymbol{R}|\boldsymbol{L}^m),\\
%         P(\boldsymbol{C}^m|\boldsymbol{R}) &= \mathcal{MN}(\boldsymbol{C}^m;\boldsymbol{M}_0,\boldsymbol{R},\boldsymbol{K}_0),\\
%         P(\boldsymbol{R})&=\mathcal{IW}(n_0,\boldsymbol{S}_0).
%     \end{aligned}
% \label{priorofoutput}
% \end{equation}
\begin{figure}[!t]

  \centering
  \includegraphics[width=3.5in]{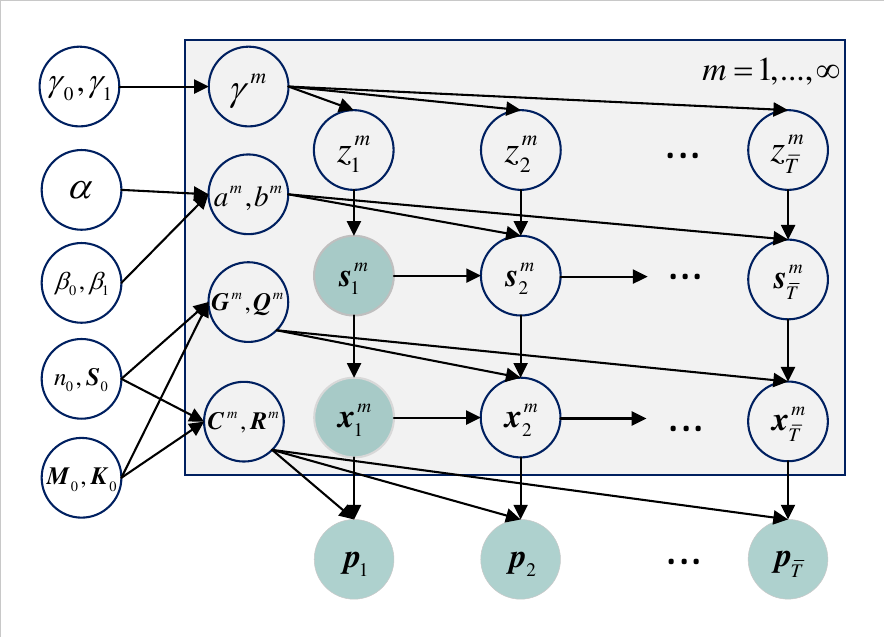}
  \caption{The graphical representation of the IFLDS; variable $\boldsymbol{x}_t^m$ denotes the latent variable of the $m$th background source at a time instant $t$; variable $s_t^m$ indicates whether the $m$th source is active or idle at time instant $t$; variable $z_t^m$ indicates whether $s_t^m$ persist on its state.
  }
  \label{sticky_IFLDS}
\end{figure}
\subsection{Parameter Learning}
This paper combines slice sampling (SS) \cite{neal2003slice} and particle Gibbs with ancestor sampling (PGAS) \cite{lindsten2014particle} to learn the parameters of the IFLDS. A similar approach was adopted in~\cite{gael2008infinite}, \cite{valera2015infinite}, \cite{ ruiz2018infinite}. The learning process in the proposed model consists of three iterative steps: sampling the number of parallel chains ($M$), sampling local parameters $\boldsymbol{S},\boldsymbol{X},\boldsymbol{Z}$ conditioned on the global parameters $\boldsymbol{\Gamma}$, and sampling the global parameters given the local parameters. These steps are explained in detail below. 

\subsubsection{Parallel Chain Number Sampling}
To sample the number of parallel chains, this study employs the SS method \cite{gael2008infinite}. By assuming that there are $M'$ active chains, an auxiliary slice variable $\vartheta$ is sampled through a uniform distribution $\mathcal{U}$ as follows:
\begin{equation}
    \vartheta \sim \mathcal{U}(0, \min_{m:\exists t,s_t^m=1}a^{(m)}).
\end{equation}
A new Markov chain with all idle states $\boldsymbol{s}_t^{M'+1}=0, t\in[1,\bar{T}]$ is introduced to the model. The corresponding hidden states $z_t^{M'+1}$ are all set to one. Then, the variable $a^{(m)},m\in(1,M'+1)$ is iteratively sampled according to the following equation:
\begin{equation}
\begin{aligned}
     P(a^{(m)}|a^{(m-1)})\propto \exp \bigg(\alpha \sum_{t=1}^{\bar{T}} \frac{1}{t} (1-a^{(m)})^t\bigg) \\
     \times(a^{(m)})^{\alpha-1}(1-a^{(m)})^\top \mathbb{I}(0\leq a^{(m)}\leq a^{(m-1)}).
\end{aligned}
\end{equation}
where $a^{(M')}=\min_{m:\exists t,s_t^m=1}a^{(m)}$. If $a^{(m)} $ exceeds the slice variable, matrices $\boldsymbol{S}$, $\boldsymbol{X}$, and $\boldsymbol{Z}$ are expanded by adding a new column, and the global variables $\boldsymbol{\Gamma}^{M'+1}$ of the new Markov chain are sampled using the prior distribution defined by~\eqref{prioroftransition}.
%~and~\eqref{priorofoutput}.
\subsubsection{Local Variables Sampling}
After sampling the number of parallel chains, the local parameters of finite chains need to be estimated. To this end, this study adopts the PGAS, where a sequential Monte Carlo (SMC) is running with one particle, and it is set deterministically to a reference input particle. This reference particle corresponds to the previous iteration's output. Next, a new reference trajectory is obtained by selecting one of the particle trajectories with probabilities given by their importance weights. 

There are $Q$ particles at each time, representing the hidden states $\{\boldsymbol{x}_t^m\}_{m=1}^{M'}$. Therefore, to simplify the representation, this study defines vector $\boldsymbol{\xi}_t(i)$ with a length $M$ as a state of the $i$th particle at a time $t$ and $\boldsymbol{\xi}_t(Q)$ as the input reference particle. In addition, ancestor indexes $a_t^i\in\{1,...,Q\}$ that correspond to the indexes of the ancestor particles of $\boldsymbol{\xi}_t(i)$ are defined. The $i$th particle trajectory is defined as follows:
\begin{equation}
    \boldsymbol{\xi}_{1:t}(i) = concat\bigg(\boldsymbol{\xi}_{1:t-1}({a_t^i}), \boldsymbol{\xi}_t(i)\bigg).
\end{equation}
where $concat$ is the concatenate operator. At a time $t$, the proposed method first generates the ancestor indexes for the first ($Q-1$) particles according to the importance weight $w_{t-1}^i$ at the previous time, and the $Q$th particle $\boldsymbol{\xi}_t(Q)$ is set to the reference particle. Each particle evolves across time according to the transition rule defined in~\eqref{IFLDS}.

To perform the PGAS, this study first generates the ancestor indexes for the first $Q-1$ particles according to importance weights $w_{t-1}^i$. The particles are propagated across time according to~\eqref{FLDS}.
% \begin{equation}
% \begin{aligned}
%       r_t(\boldsymbol{\xi}_t|\boldsymbol{\xi}_{1:t-1}({a_t}))&=P(\boldsymbol{\xi}_t|\boldsymbol{\xi}_{t-1}({a_t}))\\
%       &=\prod_{m=1}^M P(\boldsymbol{x}_t^m|\boldsymbol{s}_t^m)P(\boldsymbol{s}_t^m|z_t^m,{\boldsymbol{s}_{t-1}^m}({a_t}))
% \end{aligned}
% \end{equation}
Particle weight $w_t^i$ and ancestor weights $\tilde{w}_{t-1|\bar{T}}^i$ are respectively obtained by:
\begin{equation}
 % w_t^i = \frac{P(\{\boldsymbol{x}_\tau\}_{\tau=1}^t|\{\boldsymbol{y}_\tau\}_{\tau=1}^t)}{P(\{\boldsymbol{x}_\tau\}_{\tau=1}^{t-1}|\{\boldsymbol{y}_\tau\}_{\tau=1}^{t-1})P(\boldsymbol{x}_t|x_{t-1}^{a_t})}&\propto P(\boldsymbol{y}_t|\boldsymbol{x}_t)
    \begin{aligned}
        w_t^i  &\propto P(\boldsymbol{p}_t|\boldsymbol{\xi}_t),\\
        \tilde{w}_{t-1|\bar{T}}^i &\propto w_{t-1}^iP(\boldsymbol{\xi}_t(Q)|\boldsymbol{\xi}_{t-1}(i)),
    \end{aligned}
    \label{compute weights}
\end{equation}
\noindent where the transition probability of particles $P(\boldsymbol{\xi}_t|\boldsymbol{\xi}_{t-1}(a_t))$ is defined as follows:
\begin{equation}
\begin{aligned}
        P(\boldsymbol{\xi}_t|\boldsymbol{\xi}_{t-1}(a_t))=\prod_{m=1}^M P(\boldsymbol{x}_t^m|s_t^m)P(s_t^m|z_t^m,{s_{t-1}^m}({a_t})).
\end{aligned}
\end{equation}
where $s_{t-1}^m(a_t)$ is the hidden variable of the $a_t$th particle, and the global variables are dropped for simplicity. For a detailed description of the implementation process and a gentle introduction to the PGAS please refer to~\cite{lindsten2014particle}~and~\cite{gauraha2020note}.
\subsubsection{Global Variables Sampling}
The global variables of the IFLDS include $a^m,b^m,\gamma^m$ and $\boldsymbol{\Gamma}^m$. Parameters $a^m, b^m$ and $\gamma^m$ are sampled from beta distribution according to the stick-breaking construction of the MIBP~\cite{teh2007stick} as follows:
\begin{equation}
\begin{aligned}
        P(a^m|\boldsymbol{S} )&\sim Beta(n_{01}^m,1+n_{00}^m),\\
        P(b^m|\boldsymbol{S}) &\sim Beta(\beta_0+n_{11}^m, \beta_1+n_{10}^m),
\end{aligned}
\end{equation}
\noindent where $n_{ij}^m$ is the number of transitions from a state $i$ to the state $j$ of the $m$th column of $\boldsymbol{S}$, when $z_t^m=1$, where $t$ is the row index of $\boldsymbol{S}$. 

The hyper-parameter of sticky variables $\gamma^m$ is sampled from Beta distribution as follows:
\begin{equation}
    P(\gamma^m|\boldsymbol{Z}) \sim Beta(\gamma_0+n_0^m, \gamma_1+n_1^m),
\end{equation}
where $n_i^m$ is the order number of state $i$ in a state sequence $\boldsymbol{Z}^m$. 
Given the designed conjugate prior, the posterior distribution for the transition of FLDS is expressed as follows:
\begin{equation}
\begin{aligned}
     P(\boldsymbol{G}^m|\boldsymbol{Q},\boldsymbol{D}^m) &= \mathcal{MN}(\boldsymbol{G}^m;\boldsymbol{S}_{\boldsymbol{\psi}\bar{\boldsymbol{\psi}}}^m{\boldsymbol{S}_{\bar{\boldsymbol{\psi}}\bar{\boldsymbol{\psi}}}^m}^{-1},\boldsymbol{Q},\boldsymbol{S}_{\bar{\boldsymbol{\psi}}\bar{\boldsymbol{\psi}}}^m),\\
     P(\boldsymbol{Q}|\boldsymbol{D}^m)&=\mathcal{IW} (n_0+\bar{T}, \boldsymbol{S}_0+\boldsymbol{S}_{\boldsymbol{\psi}|\bar{\boldsymbol{\psi}}}^m),\\
       % P(\boldsymbol{C}^m|\boldsymbol{R},\boldsymbol{L}^m) &= \mathcal{MN}(\boldsymbol{C}^m;\boldsymbol{S}_{\boldsymbol{\varsigma}\bar{\boldsymbol{\varsigma}}}^m{\boldsymbol{S}_{\bar{\boldsymbol{\varsigma}}\bar{\boldsymbol{\varsigma}}}^m}^{-1},\boldsymbol{R},\boldsymbol{S}_{\bar{\boldsymbol{\varsigma}}\bar{\boldsymbol{\varsigma}}}^m),\\
       %  P(\boldsymbol{R}|\boldsymbol{L}^m)&=\mathcal{IW} (n_0+\bar{T}, \boldsymbol{S}_0+\boldsymbol{S}_{\boldsymbol{\varsigma}|\bar{\boldsymbol{\varsigma}}}^m),
\end{aligned}
\end{equation}
where variables $\boldsymbol{S}_{\bar{\boldsymbol{\psi}}\bar{\boldsymbol{\psi}}}^m$, $\boldsymbol{S}_{\boldsymbol{\psi}\bar{\boldsymbol{\psi}}}^m$, $\boldsymbol{S}_{\boldsymbol{\psi\psi}}^m$, and $\boldsymbol{S}_{\boldsymbol{\psi}|\bar{\boldsymbol{\psi}}}^m$ are defined as follows:
\begin{equation}
\begin{aligned}
    \boldsymbol{S}_{\bar{\boldsymbol{\psi}}\bar{\boldsymbol{\psi}}}^m &= \bar{\boldsymbol{\psi}}^m (\boldsymbol{\psi}\bar{\boldsymbol{\psi}}^m)^\top + \boldsymbol{K}_0,\\
    \boldsymbol{S}_{\boldsymbol{\psi}\bar{\boldsymbol{\psi}}}^m &= \boldsymbol{\psi}^m (\bar{\boldsymbol{\psi}}^m)^\top + \boldsymbol{M}_0\boldsymbol{K}_0,\\
\boldsymbol{S}_{\boldsymbol{\psi}\boldsymbol{\psi}}^m &= \boldsymbol{\psi}^m (\boldsymbol{\psi}^m)^\top + \boldsymbol{M}_0\boldsymbol{K}_0\boldsymbol{M}_0^\top,\\
    \boldsymbol{S}_{\boldsymbol{\psi}|\bar{\boldsymbol{\psi}}}^m &= \boldsymbol{S}_{\boldsymbol{\psi}\boldsymbol{\psi}}^m{\boldsymbol{S}_{\bar{\boldsymbol{\psi}}\bar{\boldsymbol{\psi}}}^m }^{-1}\boldsymbol{S}_{\boldsymbol{\psi}\bar{\boldsymbol{\psi}}}^m.
\end{aligned}
\end{equation}
The posterior distribution of the output parameter of FLDS $\boldsymbol{C}^m$ and $\boldsymbol{R}$ are expressed similarly.
% \begin{equation}
% \begin{aligned}
% \boldsymbol{S}_{\bar{\boldsymbol{\psi}}\bar{\boldsymbol{\psi}}}^m = \bar{\boldsymbol{\psi}}^m (\bar{\boldsymbol{\psi}}^m)^\top + \boldsymbol{K}_0&,
% \boldsymbol{S}_{\bar{\boldsymbol{\varsigma}}\bar{\boldsymbol{\varsigma}}}^m = \bar{\boldsymbol{\varsigma}}^m (\bar{\boldsymbol{\varsigma}}^m)^\top + \boldsymbol{K}_0\\
% \boldsymbol{S}_{\boldsymbol{\psi}\bar{\boldsymbol{\psi}}}^m = \boldsymbol{\psi}^m (\bar{\boldsymbol{\psi}}^m)^\top + \boldsymbol{M}_0\boldsymbol{K}_0&,
% \boldsymbol{S}_{\boldsymbol{\varsigma}\bar{\boldsymbol{\varsigma}}}^m = \boldsymbol{\varsigma}^m (\bar{\boldsymbol{\varsigma}}^m)^\top + \boldsymbol{M}_0\boldsymbol{K}_0\\
% \boldsymbol{S}_{\boldsymbol{\psi}\boldsymbol{\psi}}^m = \boldsymbol{\psi}^m (\boldsymbol{\psi}^m)^\top + \boldsymbol{M}_0\boldsymbol{K}_0\boldsymbol{M}_0^\top&,
% \boldsymbol{S}_{\boldsymbol{\varsigma}\boldsymbol{\varsigma}}^m = \boldsymbol{\varsigma}^m (\boldsymbol{\varsigma}^m)^\top + \boldsymbol{M}_0\boldsymbol{K}_0\boldsymbol{M}_0^\top\\
% \boldsymbol{S}_{\boldsymbol{\psi}|\bar{\boldsymbol{\psi}}}^m = \boldsymbol{S}_{\boldsymbol{\psi}\boldsymbol{\psi}}^m{\boldsymbol{S}_{\bar{\boldsymbol{\psi}}\bar{\boldsymbol{\psi}}}^m }^{-1}\boldsymbol{S}_{\boldsymbol{\psi}\bar{\boldsymbol{\psi}}}^m &,
% \boldsymbol{S}_{\boldsymbol{\varsigma}|\bar{\boldsymbol{\varsigma}}}^m = \boldsymbol{S}_{\boldsymbol{\varsigma}\boldsymbol{\varsigma}}^m{\boldsymbol{S}_{\bar{\boldsymbol{\varsigma}}\bar{\boldsymbol{\varsigma}}}^m }^{-1}\boldsymbol{S}_{\boldsymbol{\varsigma}\bar{\boldsymbol{\varsigma}}}^m
% \end{aligned}
% \end{equation}

\section{Transient Signal Detection and Statistical Performance}
The IFLDS reduces to the FLDS once the number of the background source ($M$) are estimated, as shown in Fig.~\ref{FLDS}. However, finding an optimal solution to the TSD problem for $1<w<\infty$ has still been an open problem in the literature. The Window-Limited CUSUM (WL-CUSUM) test~\cite{guepie2012sequential},~\cite{lai1998information} suggested to use windowed samples to calculate the detection statistic. In~\cite{guepie2012sequential}, it has been shown that after certain optimization, the WL-CUSUM for Gaussian mean change is equivalent to the FMA test. Reference~\cite{egea2018performance} has also shown that the FMA test provides a tighter bound compared to the WL-CUSUM test~\cite{egea2018performance}, allowing for a more precise prediction for detection performance. Accurate performance prediction is essential for a safe-critical system. Based on the above investigations, this study defines the stopping time using the FMA test. A detailed comparison of TSD problems can be found in~\cite{egea2022two}. The FMA detection statistic is defined as follows:
\begin{equation}
    \begin{aligned}
            W_t&= \max_{t-w+1<\nu<t}\sum_{i=t-w+1}^t \log \frac{P_\nu(\boldsymbol{p}_t|\{\boldsymbol{p}_i\}_{i=t-w+1}^{t-1})}{P_\infty(\boldsymbol{p}_t|\{\boldsymbol{p}_i\}_{i=t-w+1}^{t-1})}.
    \end{aligned}
    \label{GLR-FMA}
\end{equation}

There are two computational challenges in the above detection statistic. First, the computation of the data likelihood under the null hypothesis $P_\infty(\boldsymbol{p}_t|\{\boldsymbol{p}_i\}_{i=t-w+1}^{t-1})$ is invalid due to the factorial structure \cite{ghahramani1995factorial}. Second, the computation of the LLR is infeasible because the exhaustive search of a signal arrival time $\nu$ can not written in a recursive manner, which causes a computational burden to the detector. These computational problems have been investigated within the QCD framework. Fuh~\cite{Fuh2015} proposed the recursive CUSUM algorithm for the HMM, followed by the Shiryayev-Roberts-Pollak (SRP) algorithm for the state
space models~\cite{fuh2020asymptotically}. However, the TSD problem in the FLDS is still an open problem.

The following subsections first propose the FKFF to compute the data likelihood under the null hypothesis $P_\infty(\boldsymbol{p}_t|\{\boldsymbol{p}_i\}_{i=t-w+1}^{t-1})$ recursively. Then, the dependence structure of the underlying model is designed to address the computational problem of the LLR. Two computational challenges are addressed. Finally, the statistical performance of the proposed FMA stopping time is analyzed.

\subsection{Factorial Kalman Forward Filtering}
This study recursively calculates data likelihood $P_\infty(\boldsymbol{p}_t|\{\boldsymbol{p}_i\}_{i=t-w+1}^{t-1})$ under the null hypothesis, which is crucial for online processing. This section defines the forward variable for the FLDS as follows: 
% Using the Kalman filter~\cite{khodarahmi2023review} provides a recursive method for calculating the likelihood of the LDS. It is also necessary to find a method to recursively compute the data likelihood for the FLDS. 
\begin{equation}
\begin{aligned}
        &\boldsymbol{\alpha}_t\bigg(\{\boldsymbol{x}_t^m\}_{m=1}^M\bigg) \\
        =& P\bigg(\{\boldsymbol{p}_i\}_{i=1}^t,\{\boldsymbol{x}_t^m\}_{m=1}^M\bigg)\propto P\bigg(\{\boldsymbol{x}_t^m\}_{m=1}^M|\{\boldsymbol{p}_i\}_{i=1}^t\bigg)\\
         \propto&\int \boldsymbol{\alpha}_{t-1}\big(\{\boldsymbol{x}_{t-1}^m\}_{m=1}^M\big)P\big(\{\boldsymbol{x}_{t}^m\}_{m=1}^M|\{\boldsymbol{x}_{t-1}^m\}_{m=1}^M \big)\\
    &P(\boldsymbol{p}_{t}|\{\boldsymbol{x}_{t}^m\}_{m=1}^M)d\{\boldsymbol{x}_{t-1}^m\}_{m=1}^M.
\end{aligned}
\end{equation}

It is convenient to separate the calculation process of the forward variable $\boldsymbol{\alpha}_t$ into two iterative steps: calculation of the prediction variable and update variable, these intermediate variables are defined as:

1) Prediction variable:
\begin{equation}
    \begin{aligned}
        &\bar{\boldsymbol{\alpha}}_{t-1}(\{\boldsymbol{x}_{t}^m\}_{m=1}^M)\\
        =&\hat{\boldsymbol{\alpha}}_{t-1}(\{\boldsymbol{x}_{t-1}^m\}_{m=1}^M)P\big(\{\boldsymbol{x}_{t}^m\}_{m=1}^M|\{\boldsymbol{x}_{t-1}^m\}_{m=1}^M \big);\\
    \end{aligned}
    \label{predictionvariable}
\end{equation}

2) Update variable:
\begin{equation}
    \begin{aligned}
        &\boldsymbol{\hat{\alpha}}_{t}(\{\boldsymbol{x}_{t}^m\}_{m=1}^M)\\
        =&\int \bar{\boldsymbol{\alpha}}_{t-1}(\{\boldsymbol{x}_{t-1}^m\}_{m=1}^M)P\big(\boldsymbol{p}_t|\{\boldsymbol{x}_{t}^m\}_{m=1}^M \big)d\{\boldsymbol{x}_{t-1}^m\}_{m=1}^M\\
        \propto& \boldsymbol{\alpha}_{t}(\{\boldsymbol{x}_t^m\}_{m=1}^M).
    \end{aligned}
    \label{updatevariable}
\end{equation}

In the following lemma, the forward variable is proven to be a multi-variate Gaussian function of $\{\boldsymbol{x}_{t}^m\}_{m=1}^M$.
\begin{lemma}
    A forward variable $\boldsymbol{\alpha}_{t}\big(\{\boldsymbol{x}_{t}^m\}_{m=1}^M\big)$ is a multi-variate Gaussian function with respect to $\{\boldsymbol{x}_{t}^m\}_{m=1}^M$.
\label{lemmaforwardvariable}
\end{lemma}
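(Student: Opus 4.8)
The plan is to prove the claim by induction on $t$, showing that at each step the forward variable retains the form of an (unnormalized) multivariate Gaussian in the stacked vector $\boldsymbol{x}_t := \big((\boldsymbol{x}_t^1)^\top,\dots,(\boldsymbol{x}_t^M)^\top\big)^\top$, and that the two sub-steps \eqref{predictionvariable} and \eqref{updatevariable} each preserve Gaussianity. For the base case I would observe that $\boldsymbol{\alpha}_1(\{\boldsymbol{x}_1^m\}) = P(\boldsymbol{p}_1,\{\boldsymbol{x}_1^m\}) = P(\boldsymbol{p}_1\mid\{\boldsymbol{x}_1^m\})\prod_m P(\boldsymbol{x}_1^m)$, where $P(\boldsymbol{x}_1^m)=\mathcal{N}(\boldsymbol{0},\boldsymbol{Q})$ by \eqref{FLDS} and $P(\boldsymbol{p}_1\mid\{\boldsymbol{x}_1^m\})=\mathcal{N}\big(\sum_m\boldsymbol{C}^m\boldsymbol{x}_1^m,\,M\boldsymbol{R}\big)$ from the observation model; the product of two Gaussian densities, one linear-Gaussian in $\boldsymbol{p}_1$ given a linear function of $\boldsymbol{x}_1$, is (up to a constant in $\boldsymbol{x}_1$) Gaussian in $\boldsymbol{x}_1$, which settles $t=1$.

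For the inductive step, assume $\boldsymbol{\alpha}_{t-1}(\{\boldsymbol{x}_{t-1}^m\})\propto\mathcal{N}(\boldsymbol{x}_{t-1};\boldsymbol{\mu}_{t-1},\boldsymbol{\Sigma}_{t-1})$ for some mean and covariance. The prediction step \eqref{predictionvariable} multiplies this by $P(\{\boldsymbol{x}_t^m\}\mid\{\boldsymbol{x}_{t-1}^m\})=\prod_m\mathcal{N}(\boldsymbol{x}_t^m;\boldsymbol{G}^m\boldsymbol{x}_{t-1}^m,\boldsymbol{Q})$ (using \eqref{IFLDS}, with $\boldsymbol{G}^m$ replaced by $\boldsymbol{I}$ on the idle chains — but once $M$ is fixed and we are in the FLDS all chains are active, so $\boldsymbol{G}^m$ applies); stacking the per-chain transitions gives a block-diagonal transition matrix $\boldsymbol{G}=\mathrm{blkdiag}(\boldsymbol{G}^1,\dots,\boldsymbol{G}^M)$ and block-diagonal process noise $\boldsymbol{\mathcal{Q}}=\mathrm{blkdiag}(\boldsymbol{Q},\dots,\boldsymbol{Q})$, so the joint over $(\boldsymbol{x}_{t-1},\boldsymbol{x}_t)$ is Gaussian and its $\boldsymbol{x}_{t-1}$-marginalization (the integral implicit in passing from $\bar{\boldsymbol{\alpha}}$ to $\hat{\boldsymbol{\alpha}}$) is the standard linear-Gaussian prediction, yielding a Gaussian $\bar{\boldsymbol{\alpha}}_{t-1}$ in $\boldsymbol{x}_t$ with mean $\boldsymbol{G}\boldsymbol{\mu}_{t-1}$ and covariance $\boldsymbol{G}\boldsymbol{\Sigma}_{t-1}\boldsymbol{G}^\top+\boldsymbol{\mathcal{Q}}$. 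The update step \eqref{updatevariable} multiplies by $P(\boldsymbol{p}_t\mid\{\boldsymbol{x}_t^m\})=\mathcal{N}(\boldsymbol{p}_t;\boldsymbol{C}\boldsymbol{x}_t,M\boldsymbol{R})$ with $\boldsymbol{C}=[\boldsymbol{C}^1\;\cdots\;\boldsymbol{C}^M]$; again the product of a Gaussian in $\boldsymbol{x}_t$ with a linear-Gaussian likelihood in $\boldsymbol{p}_t$ is, as a function of $\boldsymbol{x}_t$, proportional to a Gaussian (the Kalman measurement update). Hence $\boldsymbol{\alpha}_t(\{\boldsymbol{x}_t^m\})\propto\mathcal{N}(\boldsymbol{x}_t;\boldsymbol{\mu}_t,\boldsymbol{\Sigma}_t)$, closing the induction.

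The only genuinely delicate point — and the one I would treat carefully rather than wave through — is the normalization bookkeeping: the relations in \eqref{predictionvariable}–\eqref{updatevariable} are stated with $\propto$ and involve an integral over $\{\boldsymbol{x}_{t-1}^m\}$, so I must verify that each constant absorbed by $\propto$ is genuinely independent of $\boldsymbol{x}_t$ (it depends only on $\boldsymbol{p}_{1:t}$), so that ``Gaussian function of $\boldsymbol{x}_t$'' is preserved and not silently broken by a $\boldsymbol{p}_t$-dependent-but-$\boldsymbol{x}_t$-independent factor masquerading as something worse. Concretely this reduces to the two standard Gaussian identities — marginalization of a jointly Gaussian pair, and the product-of-Gaussians / completing-the-square identity $\mathcal{N}(\boldsymbol{p};\boldsymbol{C}\boldsymbol{x},\boldsymbol{\Sigma}_v)\,\mathcal{N}(\boldsymbol{x};\boldsymbol{m},\boldsymbol{P}) \propto \mathcal{N}(\boldsymbol{x};\boldsymbol{m}',\boldsymbol{P}')$ — applied to the stacked system; I would state these once as the algebraic engine and then invoke them at each sub-step. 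A minor accompanying remark: since the integrand is a Gaussian density in $\boldsymbol{x}_{t-1}$ times a quantity not involving $\boldsymbol{x}_{t-1}$ at the update stage, the integral is trivial there, and the nontrivial marginalization actually sits inside the prediction-to-update transition; getting the order of operations in \eqref{predictionvariable}–\eqref{updatevariable} right is what makes the recursion match the classical Kalman forward filter, which is precisely the ``factorial Kalman forward filtering'' the next section builds on.
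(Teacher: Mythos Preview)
Your proposal is correct and follows essentially the same route as the paper: induction on $t$, with the base case handled as the product of the Gaussian prior $\prod_m P(\boldsymbol{x}_1^m)$ and the linear-Gaussian likelihood $P(\boldsymbol{p}_1\mid\{\boldsymbol{x}_1^m\})$, and the inductive step by the prediction (convolution/marginalization of Gaussians) and update (product with a linear-Gaussian likelihood) sub-steps. The paper's proof outline additionally records the explicit per-chain recursions \eqref{kalman update}--\eqref{kalman gain}, which your stacked block-diagonal formulation would reproduce once expanded, but that is bookkeeping downstream of the Gaussianity claim rather than a different argument.
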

\begin{proof}[Proof outline]
Lemma 1 can be proven using induction. When $t=1$, $\boldsymbol{\alpha}_{1}\big(\{\boldsymbol{x}_{1}^m\}_{m=1}^M\big) = P\big(\{\boldsymbol{x}_{1}^m\}_{m=1}^M\big)P\big(\boldsymbol{p}_1|\{\boldsymbol{x}_{1}^m\}_{m=1}^M\big)$ is a multi-variate Gaussian function based on~\eqref{FLDS}. This function can be established by computing the product of two multivariate Gaussian functions. For $t>1$, $\alpha_t\big(\{\boldsymbol{x}_t^m\}_{m=1}^M\big)$ is a Gaussian function. This can be proven by the product and the convolution of Gaussian density functions. This study parametrizes the update and prediction steps of the FLDS as follows:
\begin{equation}
\begin{aligned}
          \hat{\boldsymbol{\alpha}}_{t}\big(\{\boldsymbol{x}_{t}^m\}_{m=1}^M\big)&\sim \mathcal{N}(\hat{\boldsymbol{\mu}}_t, \hat{\boldsymbol{\Sigma}}_t),\\
          \bar{\boldsymbol{\alpha}}_{t}\big(\{\boldsymbol{x}_{t}^m\}_{m=1}^M\big)&\sim\mathcal{N}(\bar{\boldsymbol{\mu}}_t, \bar{\boldsymbol{\Sigma}}_t),
\end{aligned}
\end{equation}
\noindent where $\bar{\boldsymbol{\mu}}_t = col\bigg(\{ \bar{\boldsymbol{\mu}}_t^m\}_{m=1}^M\bigg)$ and $\hat{\boldsymbol{\mu}}_t = col\bigg(\{ \hat{\boldsymbol{\mu}}_t^m\}_{m=1}^M\bigg)$; also $\bar{\boldsymbol{\Sigma}}_t= diag\bigg(\{ \bar{\boldsymbol{\Sigma}}_t^m\}_{m=1}^M\bigg)$ and $\hat{\boldsymbol{\Sigma}}_t= diag\bigg(\{ \hat{\boldsymbol{\Sigma}}_t^m\}_{m=1}^M\bigg)$; operator $col(\cdot)$ generates a column vector using the input variables; operator $diag(\cdot)$ constructs a diagonal matrix with the input variables; thus, the mean value of the Gaussian distributions $\hat{\boldsymbol{\mu}}_t$ and $ \bar{\boldsymbol{\mu}}_t$ are matrices with a size of $(2M, 1)$, and the covariance of the Gaussian distributions $\hat{\boldsymbol{\Sigma}}_t$ and $\bar{\boldsymbol{\Sigma}}_t$ are matrices with a size of $(2M, 2M)$. 

Based on the knowledge about the distribution family, the update function of forward variables can be analytically expressed:
\begin{equation}
\begin{aligned}
        \bar{\boldsymbol{\mu}}_t^m&=\boldsymbol{G}^m \hat{\boldsymbol{\mu}}_{t-1}^m,\\
        \bar{\boldsymbol{\Sigma}}_t^m &= \boldsymbol{G}^m\hat{\boldsymbol{\Sigma}}_{t-1}^m {\boldsymbol{G}^m}^\top +\boldsymbol{Q},\\
        \hat{\boldsymbol{\mu}}_t^m&=\boldsymbol{G}^m \hat{\boldsymbol{\mu}}_{t-1}^m+\boldsymbol{K}^m\big(\boldsymbol{p}_t-\sum_{m=1}^M \boldsymbol{C}^m\boldsymbol{G}^m \boldsymbol{x}_{t-1}^m\big),\\
         \hat{\boldsymbol{\Sigma}}_t^m &= \bar{\boldsymbol{\Sigma}}_{t}^m
-\boldsymbol{K}^m\bigg({\boldsymbol{C}^m}(\bar{\boldsymbol{\Sigma}}_t^m)^\top
+\sum_{\substack{n=1\\n\neq m}}\boldsymbol{C}^n \boldsymbol{Q}^\top\bigg),\\
\end{aligned}
\label{kalman update}
\end{equation}
\noindent where $\boldsymbol{K}^m$ is defined as a factorial Kalman gain that is expressed by:
\begin{equation}
\begin{aligned}
    \boldsymbol{K}^m &= \bigg(\bar{\boldsymbol{\Sigma}}_t^m{\boldsymbol{C}^m}^\top+\sum_{\substack{n=1\\n\neq m}}\boldsymbol{Q}{\boldsymbol{C}^n}^\top\bigg)\\
\bigg(\sum_{n=1}^M &\boldsymbol{C}^n\bar{\boldsymbol{\Sigma}_t^n}{\boldsymbol{C}^n}^\top+\sum_{n=1}^M\sum_{\substack{m=1\\m\neq n}}^M \boldsymbol{C}^n\boldsymbol{Q}{\boldsymbol{C}^m}^\top +M^2\boldsymbol{R}\bigg)^{-1}.
\end{aligned}
\label{kalman gain}
\end{equation}
The detailed derivation of \eqref{kalman update} can be found in Section I of Supplemental Material.
\end{proof}

The main difference between the FKFF and conventional Kalman forward filtering \cite{chopin2020introduction} lies in the update variable calculation, where the correlation between different LDSs is introduced. In the FLDS model, latent variable $\boldsymbol{x}_t^m$, $\boldsymbol{x}_t^n (n\neq m)$ and observed variable $\boldsymbol{p}_t$ construct a v-structure. The latent variables $\boldsymbol{x}_t^m$ and $\boldsymbol{x}_t^n$ are independently conditioned on the observed variable $\boldsymbol{p}_t$. A detailed discussion of the v-structure has been found in~\cite{koller2009probabilistic}.

Given the forward variable, the data likelihood function at a time $t$, conditioned on the previous observations, can be expressed by:
\begin{equation}
P(\boldsymbol{p}_t|\{\boldsymbol{p}_i\}_{i=1}^{t-1})  =\mathcal{N}(\boldsymbol{p}_t;\boldsymbol{\mu}_t^p, \boldsymbol{\Sigma}_t^p),
\label{FFKF conditional likelihood}
\end{equation}
\noindent where $\boldsymbol{\mu}^p_t = \sum_{m=1}^M\big(\boldsymbol{C}^m\bar{\boldsymbol{\mu}}_t^m\big)$ and $\boldsymbol{\Sigma}_t^p = \sum_{m=1}^M\big(\boldsymbol{C}^m\bar{\boldsymbol{\Sigma}}_t^m{\boldsymbol{C}^m}^\top+M\boldsymbol{R}\big)$.  Next, it can be derived that the joint probability density is equal to the sum of the conditional distribution, which can be expressed by:
\begin{equation}
    P(\{\boldsymbol{p}_i\}_{i={t-w+1}}^t) = \prod_{i=t-w+1}^{t} P(\boldsymbol{p}_i|\{\boldsymbol{p}_j\}_{j=1}^{i-1}).
    \label{FKFFbenefit}
\end{equation}

Although the observed data is not i.i.d., the likelihood function can be written as the product of the observation likelihoods, which is the product of Gaussian functions.

\subsection{Finite Moving Average Stopping Time}
This section reformulates the detection statistic by designing a dependence structure of the underlying model when the SOI is present and absent. The designed structure is based on the following assumptions:

\begin{assumption}
The time index $t$ approaches $\infty$.
\end{assumption}
\begin{assumption}
    The latent variable at a time $\nu$ ($\boldsymbol{x}_\nu^m$) is independent of the latent variable at a time $\nu-1$ ($\boldsymbol{x}_{\nu-1}^m$), but it depends on the previous ($\nu-1$) observations.
\end{assumption}
Assumption 1 indicates that the latent variable at a time $t$ is independent of its initial value. Assumption 2 relaxes the dependence between the latent variables $\boldsymbol{x}_\nu^m$ and $\boldsymbol{x}_{\nu-1}^m$. Given these two assumptions, we have the following Theorem:
\begin{theorem}
Given Assumptions 1 and 2. The probability density function of the observation at a time $t$ $P_\nu(\boldsymbol{p}_t|\{\boldsymbol{p}_i\}_{i=t-w+1}^{t-1})$ is equivalent to $P_1(\boldsymbol{p}_t|\{\boldsymbol{p}_i\}_{i=t-w+1}^{t-1})$.
    \label{FMAgap}
\end{theorem}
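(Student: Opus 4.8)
The plan is to show that the conditional density $P_\nu(\boldsymbol{p}_t \mid \{\boldsymbol{p}_i\}_{i=t-w+1}^{t-1})$ does not actually depend on the change point $\nu$ once Assumptions 1 and 2 are in force, so in particular it equals the value obtained at $\nu=1$. The key observation is that the conditional density of $\boldsymbol{p}_t$ given the windowed past is governed entirely by the forward (filtering) recursion of the FLDS started from within the window, and under the two assumptions the influence of everything preceding the window — including the precise location of $\nu$ relative to the window — is washed out.

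First I would write, using the FKFF machinery of Lemma~\ref{lemmaforwardvariable}, the conditional density $P_\nu(\boldsymbol{p}_t \mid \{\boldsymbol{p}_i\}_{i=t-w+1}^{t-1})$ as a Gaussian $\mathcal{N}(\boldsymbol{p}_t;\boldsymbol{\mu}_t^p,\boldsymbol{\Sigma}_t^p)$ whose parameters are produced by propagating the forward variable $\boldsymbol{\alpha}$ through the window $[t-w+1,t]$. The only place $\nu$ can enter these parameters is through the initialization of the forward recursion at time $t-w+1$, i.e. through the filtering distribution $P_\nu(\{\boldsymbol{x}_{t-w+1}^m\}_{m=1}^M \mid \{\boldsymbol{p}_i\}_{i=1}^{t-w})$ (or, if $\nu$ falls inside the window, through the change of the transition/observation model at the single time $\nu$). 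Second, I would invoke Assumption 2: the latent state $\boldsymbol{x}_\nu^m$ is decoupled from $\boldsymbol{x}_{\nu-1}^m$ but still conditioned on the past observations, which means the arrival of the SOI at $\nu$ does not propagate a model change forward through the latent chain — the latent dynamics re-initialize at $\nu$ in a way that depends only on the observation history, exactly as they would under $P_1$. Hence the forward recursion entering the window is the same Kalman-type recursion regardless of whether $\nu < t-w+1$ or $\nu$ lies in the window.

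Third, I would use Assumption 1 (the time index $t\to\infty$) together with the stability of the Kalman filter for the FLDS: as the recursion runs for arbitrarily many steps before reaching the window, the filtered mean and covariance converge to a steady state that is independent of the initial condition $\boldsymbol{x}_1 \sim \mathcal{N}(\boldsymbol{0},\boldsymbol{Q})$ — this is the standard contraction property of the Riccati recursion, whose fixed point $\bar{\boldsymbol{\Sigma}}^m$ exists under the model's detectability/stabilizability (implicit in the assumed $2\times2$ structure). Consequently the initialization of the window-local forward recursion is the same steady-state Gaussian whether the chain was run under $P_1$ or under $P_\nu$ for any finite $\nu$, and feeding identical initializations through the identical FKFF update/prediction equations \eqref{kalman update}–\eqref{kalman gain} over the window $[t-w+1,t-1]$ yields identical $\boldsymbol{\mu}_t^p$ and $\boldsymbol{\Sigma}_t^p$. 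Therefore $P_\nu(\boldsymbol{p}_t \mid \{\boldsymbol{p}_i\}_{i=t-w+1}^{t-1}) = P_1(\boldsymbol{p}_t \mid \{\boldsymbol{p}_i\}_{i=t-w+1}^{t-1})$.

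The main obstacle I expect is making rigorous the claim that the change at time $\nu$ leaves the window-local conditional density untouched: one must carefully separate the two cases $\nu \le t-w+1$ (where $\nu$ is entirely outside the conditioning window, so the argument is cleanly "forgetting of initial conditions") and $\nu \in (t-w+1, t)$ (where the model genuinely differs at one time step inside the window), and argue that Assumption 2 is precisely what collapses the second case to the first by preventing the SOI-induced perturbation of $\boldsymbol{x}_\nu^m$ from being distinguishable, in its effect on the filtering recursion, from a fresh observation-conditioned re-initialization. I would handle this by writing the forward variable at time $\nu$ explicitly under both measures and showing the two expressions coincide once the $\boldsymbol{x}_\nu^m \perp \boldsymbol{x}_{\nu-1}^m$ decoupling is imposed; the remaining steps are then routine propagation of Gaussians.
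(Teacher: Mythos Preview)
Your overall architecture is the right one and matches the paper's framework: reduce the conditional density to the FKFF predictive Gaussian, then use Assumption~2 to sever the dependence of the filtering distribution at time $\nu$ on the pre-change latent trajectory, and use Assumption~1 to remove any residual dependence on initial conditions. Since the paper's proof (relegated to the Supplemental Material) is built on exactly these two assumptions together with the FKFF recursion of Lemma~\ref{lemmaforwardvariable}, your route is essentially the intended one.

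Two points deserve tightening, though. First, your statement that ``the filtered mean and covariance converge to a steady state that is independent of the initial condition'' is only half right: the covariance converges via the Riccati recursion, but the filtered \emph{mean} does not converge to a fixed value --- it remains a linear functional of the observed data. What Assumption~1 buys you is that the \emph{gain} and \emph{covariance} have stabilized, so the predictive covariance $\boldsymbol{\Sigma}_t^p$ is the same under every $P_\nu$; the equality of predictive means must come from Assumption~2, not from forgetting. Second, and related, be careful about what ``the SOI-induced perturbation of $\boldsymbol{x}_\nu^m$'' means: the SOI enters only the observation equation, never the latent transition, so the latent dynamics are identical under every $P_\nu$. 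The place $\nu$ actually enters the filter is through which observations the filter treats as $\boldsymbol{p}_i$ versus $\boldsymbol{p}_i-\boldsymbol{y}_i$ when forming innovations. Assumption~2 is precisely the device that lets you replace the $P_\nu$-filter's posterior at time $\nu$ (which has processed $\boldsymbol{p}_{1:\nu-1}$ under the pre-change observation model) by a distribution depending only on the observation history --- and the paper's ``designed dependence structure'' is what forces this to coincide with the $P_1$-filter's posterior at the same time. Your final paragraph correctly isolates this as the crux; when you carry it out, make the identification of the two posteriors at time $\nu$ explicit rather than appealing to generic forgetting.
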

\begin{proof}
 The proof is given in Section II of the Supplemental Material.
\end{proof}
Given Theorem \ref{FMAgap}, the LLR of observation $\boldsymbol{p}_t$ can be reformulated as follows:
\begin{equation}
    \hat{L}_t=\log \frac{P_1(\boldsymbol{p}_t|\{\boldsymbol{p}_i\}_{i=1}^{t-1})}{P_\infty(\boldsymbol{p}_t|\{\boldsymbol{p}_i\}_{i=1}^{t-1})}.
    \label{likelihood ratio}
\end{equation}

The above LLR can be intuitively understood as two FKFF algorithms running in parallel conditioned on the same observations, one with the SOI never arriving and the other with the signal arriving at $t=1$ and never ends. A similar transformation was performed in  \cite{sun2024quickest}, \cite{Fuh2015}. In~\eqref{likelihood ratio}, the denominator can be obtained by direct application of~\eqref{kalman update},~\eqref{kalman gain},~and~\eqref{FFKF conditional likelihood}. The numerator of~\eqref{likelihood ratio} denotes the likelihood under the alternative hypothesis $\mathcal{H}_1$ with PDF $\mathcal{N}(\boldsymbol{\mu}_t^p + \boldsymbol{y}_t,\boldsymbol{\Sigma}_t^p)$. Then, the FMA stopping time $\tau$ can be written as follows:
\begin{equation}
    \tau = \inf \{t\geq w : \hat{W}_t \geq h\}, \hat{W}_t = \sum_{i=t-w+1}^{t}\hat{L}_i,
    \label{detection statistic}
\end{equation}
\noindent where $\hat{W}_0=0$, $h$ represents the detection threshold, and $w$ refers to the cumulative length or window length. 

The window length $w$ is assumed to correspond to the tolerable delay of detection. This study assumes that the FMA is not operational during the first ($w-1$) observations. At a time $t$, we only need to compute $\hat{L}_t$. The LLR is equivalent to the following equation:

% The log-likelihood ratio at each time step can be simplified, as the LLR of the FLDS are modeled as Gaussian distributions. Before simplifying the LLR, we introduce the following lemma:
% \begin{lemma}
%     As $t\to \infty$, the $\bar{\boldsymbol{\Sigma}}_t^p$ converges to $\boldsymbol{\Sigma}^*$.
%     \label{asymptotic sigma}
% \end{lemma}
% \begin{proof}
%     Lemma 2 of  \cite{sun2024quickest} proved that $\hat{\boldsymbol{\Sigma}}_t^m\to(\hat{\boldsymbol{\Sigma}}^m)^*$ as $t\to\infty$. Thus, $\bar{\boldsymbol{\Sigma}}_t^p$ converges to $\sum_{m=1}^M\big({ \boldsymbol{C}^m}^\top ({\boldsymbol{G}^m}^\top \hat{\boldsymbol{\Sigma}}^m{\boldsymbol{G}^m+\boldsymbol{Q}}^\top)^* \boldsymbol{C}^m+\boldsymbol{R}\big) = \boldsymbol{\Sigma}^*$.
% \end{proof}
% Given Lemma \ref{asymptotic sigma}, it is easy to verify that the log-likelihood of each observation can be written as:
\begin{equation}
   \hat{L}_t = \log \frac{P_1(\boldsymbol{p}_t|\{\boldsymbol{p}_i\}_{i=1}^t)}{P_\infty(\boldsymbol{p}_t|\{\boldsymbol{p}_i\}_{i=1}^t)} = \big[\boldsymbol{p}_t-\boldsymbol{\mu}_t^p-\frac{1}{2}\boldsymbol{y}_t\big]^\top{\boldsymbol{\Sigma}_t^p}^{-1}\boldsymbol{y}_t.
   \label{LLR}
\end{equation}
\subsection{Statistical Performance}

This section theoretically investigates the statistical performance of the proposed stopping time $\tau$ defined by~\eqref{detection statistic}. Specifically, the worst-case probability of missed detection and the worst-case probability of a false alarm for any interval with a length $w_\alpha$ need to be examined. The exact calculation of these probabilities is a difficult problem, we turn to derive the performance bounds. These bounds are defined by the following Theorem.

\begin{theorem}
  Consider the FMA stopping time $\tau$. The worst-case probability of a false alarm over a given interval $w_\alpha$ is bounded by:
    \begin{equation}
        P_{FA} (\tau, w_\alpha) \leq \alpha (h, w_\alpha),
    \end{equation}
    where,
    \begin{equation}
        \alpha(h,w_\alpha) = 1-\bigg[P_\infty\big(\hat{W}_w <h\big)\bigg] ^{w_\alpha}.
        \label{false alarm}
    \end{equation}
    The upper bound of the worst-case probability of missed detection is given by:
    \begin{equation}
        P_{MD}(\tau, w) \leq \beta(h,w),
    \end{equation}
    where,
    \begin{equation}
        \beta(h,w)=P_1\big(\hat{W}_w<h\big).
        \label{missed detection}
    \end{equation}
    \label{FMAperformance}
\end{theorem}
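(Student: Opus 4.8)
The plan is to prove the two bounds separately; in each case I would first reduce the stopping‑time event to an event about a single window statistic $\hat W_t$, and then identify the law of that statistic using Theorem \ref{FMAgap} and Assumptions 1 and 2, together with the FKFF recursion of Lemma \ref{lemmaforwardvariable}.

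For the false‑alarm bound I would start from the set inclusion $\{l\le\tau<l+w_\alpha\}\subseteq\{\exists\,t\in\{l,\dots,l+w_\alpha-1\}:\hat W_t\ge h\}$, which holds because on the left‑hand event $\tau$ itself lies in $\{l,\dots,l+w_\alpha-1\}$ and $\hat W_\tau\ge h$ by definition of $\tau$. Taking complements, $P_\infty(l\le\tau<l+w_\alpha)\le 1-P_\infty\big(\hat W_t<h\ \text{for all }t\in\{l,\dots,l+w_\alpha-1\}\big)$. The key observation is that under $P_\infty$ the increments $\hat L_i$ are independent: by \eqref{LLR}, $\hat L_i=\big[\boldsymbol e_i-\tfrac12\boldsymbol y_i\big]^\top{\boldsymbol\Sigma_i^p}^{-1}\boldsymbol y_i$ is a deterministic affine function of the one‑step prediction error $\boldsymbol e_i=\boldsymbol p_i-\boldsymbol\mu_i^p$ (the coefficient ${\boldsymbol\Sigma_i^p}^{-1}\boldsymbol y_i$ and the covariance $\boldsymbol\Sigma_i^p$ being data‑independent quantities produced by the FKFF recursion), and under the null the innovation sequence of the Kalman‑type filter is white. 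Hence each $\hat W_t=\sum_{i=t-w+1}^t\hat L_i$ is a nondecreasing function of the independent family $\{\hat L_i\}$, so the moving sums are positively associated and $P_\infty\big(\bigcap_t\{\hat W_t<h\}\big)\ge\prod_t P_\infty(\hat W_t<h)$. Under Assumption 1 the filter is in steady state, so $P_\infty(\hat W_t<h)=P_\infty(\hat W_w<h)$ for every $t$; since at most $w_\alpha$ terms appear, the product is at least $[P_\infty(\hat W_w<h)]^{w_\alpha}$. Taking the supremum over $l$ gives $P_{FA}(\tau,w_\alpha)\le 1-[P_\infty(\hat W_w<h)]^{w_\alpha}=\alpha(h,w_\alpha)$.

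For the missed‑detection bound I would use that $\{\tau\ge\nu+w\}\subseteq\{\hat W_{\nu+w-1}<h\}$, because the test does not fire at time $\nu+w-1$, and the window of $\hat W_{\nu+w-1}$, namely $[\nu,\nu+w-1]$, is exactly the support of the SOI under $P_\nu$. Therefore $P_\nu(\tau\ge\nu+w\mid\tau\ge\nu)\le P_\nu(\hat W_{\nu+w-1}<h\mid\tau\ge\nu)$. Theorem \ref{FMAgap} then tells us that, conditioned on the preceding observations, the law of $\boldsymbol p_t$ for $t$ in this window under $P_\nu$ coincides with its law under $P_1$, so $\hat W_{\nu+w-1}$ under $P_\nu$ has the same distribution as $\hat W_w$ under $P_1$; moreover Assumption 2 severs the dependence of $\boldsymbol x_\nu^m$ on $\boldsymbol x_{\nu-1}^m$ and Assumption 1 removes the influence of the initial state, so this window statistic is (conditionally) independent of the pre‑$\nu$ history that determines $\{\tau\ge\nu\}$. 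Hence $P_\nu(\hat W_{\nu+w-1}<h\mid\tau\ge\nu)=P_1(\hat W_w<h)$, which is free of $\nu$, and taking the supremum over $\nu\ge1$ yields $P_{MD}(\tau,w)\le P_1(\hat W_w<h)=\beta(h,w)$.

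The hard part of the false‑alarm estimate will be the factorization step: the $\hat W_t$ form an overlapping, $w$‑dependent sequence, so outright independence fails and one must justify $P_\infty\big(\bigcap_t\{\hat W_t<h\}\big)\ge\prod_t P_\infty(\hat W_t<h)$ — which is precisely what the whiteness of the FKFF innovations under $P_\infty$ plus the monotonicity of the moving sums (positive association) delivers. The hard part of the missed‑detection estimate will be discharging the conditioning on $\{\tau\ge\nu\}$: without Assumptions 1 and 2 the current window statistic shares the filter state with the event $\{\tau\ge\nu\}$ and the two are dependent, so it is exactly Theorem \ref{FMAgap} together with Assumptions 1 and 2 that make the window at $\nu+w-1$ behave like a fresh window under $P_1$ independent of the past, rendering both bounds uniform in $\nu$ (and in $l$).
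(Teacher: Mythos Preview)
Your proposal is correct and follows essentially the same route the paper takes (which in turn adapts the standard FMA arguments of \cite{guepie2012sequential,egea2018performance}): bound the false‑alarm event by the union $\{\exists\,t:\hat W_t\ge h\}$, use whiteness of the FKFF innovations under $P_\infty$ together with positive association of the overlapping moving sums to factorize, and for missed detection collapse to the single window $\hat W_{\nu+w-1}$ and invoke Theorem~\ref{FMAgap} with Assumptions~1--2 to identify its law with that of $\hat W_w$ under $P_1$ and to drop the conditioning on $\{\tau\ge\nu\}$. The only cosmetic point is that the interval $\{l,\dots,l+w_\alpha-1\}$ contains exactly $w_\alpha$ indices, not ``at most'' $w_\alpha$.
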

\begin{proof}
    The proof is given in Section III of the Supplemental Material.
\end{proof}
Note that $\alpha$ and $\tilde{\alpha}$ in \eqref{false alarm constraint} are not equal, the value $\tilde{\alpha}$ is selected to satisfy $P_{FA}<\tilde{\alpha}$. Given the upper bound of $P_{FA}$ and $P_{MD}$, detection threshold $h$ can be determined to satisfy the false alarm constraint, as given in the following Corollary.
\begin{corollary}
    Assume $F_i, i=\{\infty, 1\}$ is a cumulative distribution function of detection statistic $\hat{W}_t$ when the SOI is absent ($F_\infty$) and present ($F_1$). Then, a possible detection threshold $h$ is given by:
    \begin{equation}
        h(\tilde{\alpha}) = F^{-1}_\infty\bigg[(1-\tilde{\alpha})^{\frac{1}{w_\alpha}}\bigg],
        \label{threshold}
    \end{equation}
    where $\tilde{\alpha}$ is the selected probability of a false alarm, $F_\infty^{-1}$ is the inverse function of $F_\infty$, and the upper bound of the $P_{MD}$ is given by:
    \begin{equation}
\beta(h(\tilde{\alpha}),w)=F_1\bigg[F_\infty^{-1}\big[(1-\tilde{\alpha})^\frac{1}{w_\alpha}\big]\bigg],
\label{beta}
    \end{equation}
    where $h$ is the predefined threshold determined based on~\eqref{threshold}.
\end{corollary}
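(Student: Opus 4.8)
The plan is to read the Corollary off Theorem~\ref{FMAperformance}: convert the false-alarm upper bound into a constraint that pins down $h$, then feed that $h$ into the missed-detection upper bound. First I would record the distributional facts needed to make the statement well posed. By Assumption~1 the law of $\hat{W}_t$ does not depend on $t$, so $\hat{W}_w$ is representative of the window statistic under both measures, with CDFs $F_\infty$ and $F_1$. Moreover, from~\eqref{LLR} each summand $\hat{L}_i$ is an affine functional of $\boldsymbol{p}_i$, and the joint law of $\{\boldsymbol{p}_i\}$ is Gaussian under both $P_\infty$ and $P_1$ (it is a linear-Gaussian state-space model, the shift by $\boldsymbol{y}_t$ under $\mathcal{H}_1$ only moving the mean); hence $\hat{W}_w=\sum_{i}\hat{L}_i$ is univariate Gaussian under each measure, so $F_\infty$ and $F_1$ are continuous and strictly increasing, and $F_\infty^{-1}$ is well defined on $(0,1)$. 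In particular $P_\infty(\hat{W}_w<h)=F_\infty(h)$ and $P_1(\hat{W}_w<h)=F_1(h)$.

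Next I would enforce the false-alarm constraint~\eqref{false alarm constraint}. By Theorem~\ref{FMAperformance}, $P_{FA}(\tau,w_\alpha)\le\alpha(h,w_\alpha)=1-[F_\infty(h)]^{w_\alpha}$, so it is enough to pick $h$ with $1-[F_\infty(h)]^{w_\alpha}\le\tilde{\alpha}$, i.e. $F_\infty(h)\ge(1-\tilde{\alpha})^{1/w_\alpha}$. Because $(1-\tilde{\alpha})^{1/w_\alpha}\in(0,1)$ and $F_\infty$ is strictly increasing, this is equivalent to $h\ge F_\infty^{-1}\big[(1-\tilde{\alpha})^{1/w_\alpha}\big]$, and the smallest admissible choice is $h(\tilde{\alpha})=F_\infty^{-1}\big[(1-\tilde{\alpha})^{1/w_\alpha}\big]$, which also makes the bound exactly tight: $\alpha(h(\tilde{\alpha}),w_\alpha)=\tilde{\alpha}$. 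This is precisely~\eqref{threshold}, and it certifies that the stopping time with threshold $h(\tilde{\alpha})$ belongs to the admissible class.

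Finally I would substitute $h(\tilde{\alpha})$ into the missed-detection bound: Theorem~\ref{FMAperformance} gives $P_{MD}(\tau,w)\le\beta(h,w)=P_1(\hat{W}_w<h)=F_1(h)$, and plugging $h=h(\tilde{\alpha})$ yields $\beta(h(\tilde{\alpha}),w)=F_1\big[F_\infty^{-1}\big[(1-\tilde{\alpha})^{1/w_\alpha}\big]\big]$, which is exactly~\eqref{beta}. I would add the remark that, since $\beta(h,w)=F_1(h)$ is nondecreasing in $h$ while $\alpha(h,w_\alpha)$ is nonincreasing in $h$, the choice $h(\tilde{\alpha})$ is not just admissible but the one that minimizes the missed-detection upper bound among all thresholds meeting the false-alarm requirement, which is why it is the threshold reported. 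The only genuinely delicate point is the invertibility of $F_\infty$ invoked at the start; should one wish to avoid the Gaussianity argument, it can be replaced by the generalized inverse $h(\tilde{\alpha})=\inf\{h:F_\infty(h)\ge(1-\tilde{\alpha})^{1/w_\alpha}\}$, after which the same chain of inequalities goes through with "$\le$" replacing the equality $\alpha(h(\tilde{\alpha}),w_\alpha)=\tilde{\alpha}$ — so this is a minor technicality rather than a real obstacle.
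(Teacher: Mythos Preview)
Your proposal is correct and follows essentially the same route as the paper: invert the false-alarm upper bound from Theorem~\ref{FMAperformance} to obtain~\eqref{threshold}, then substitute into the missed-detection upper bound to get~\eqref{beta}. The paper's own proof is extremely terse (two sentences), whereas you carefully justify the invertibility of $F_\infty$ via Gaussianity of $\hat{W}_w$ and add the optimality remark that $h(\tilde{\alpha})$ minimizes $\beta(h,w)$ among admissible thresholds; these are welcome clarifications but not a different strategy.
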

\begin{proof}
\eqref{threshold} can be derived by direct application of~\eqref{false alarm} and~\eqref{missed detection}. Since $\tilde{h}<h(\tilde{\alpha})$ and $\beta(h,w)$ is a monotonically increasing function with respect to $h$, it holds that $P_{MD}(\tau,w_\alpha)\leq \beta(\tilde{h},w)\leq \beta(h, w)$. Then, we have~\eqref{beta}.
\end{proof}

Under the null hypothesis $(\mathcal{H}_0)$, $\boldsymbol{p}_t - \boldsymbol{\mu}_t^p$ is a multivariate Gaussian with a zero mean and a covariance matrix $\boldsymbol{\Sigma}_t^p$. The mean and variance of the LLR under $\mathcal{H}_0$ are denoted by $\mu_{\mathcal{H}_0}$ and $\sigma_{\mathcal{H}_0}^2$, respectively:

\begin{equation}
\begin{aligned}
        \mu_{\mathcal{H}_0} = -\frac{1}{2}\boldsymbol{y}_t{\boldsymbol{\Sigma}_t^p}^{-1}\boldsymbol{y}_t&,\sigma^2_{\mathcal{H}_0} = {\boldsymbol{y}_t}^\top {\boldsymbol{\Sigma}_t^p}^{-1} \boldsymbol{y}_t.\\
         % \mu_{\mathcal{H}_1} = \frac{1}{2}\boldsymbol{s}_t{\boldsymbol{\Sigma}^*}^{-1}\boldsymbol{s}_t&,\sigma^2_{\mathcal{H}_1} = {\boldsymbol{s}_t}^\top {\boldsymbol{\Sigma}^*}^{-1} \boldsymbol{s}_t
\end{aligned}
\label{PDF_H0}
\end{equation}

Under the alternative hypothesis ($\mathcal{H}_1$), the probability density function and its parameters can not be directly obtained. In the following lemma, the distribution of LLR under $\mathcal{H}_1$ is derived as follows.
\begin{lemma}
    Under $\mathcal{H}_1$, the LLR follows multi-variate Gaussian distribution with a mean ${\mu}_{\mathcal{H}_1,t}$ and $\sigma_{\mathcal{H}_1}^2$, where,
    \begin{equation}
    \begin{aligned}
    \mu_{\mathcal{H}_1,t} &= [\boldsymbol{e}_t-\frac{1}{2}\boldsymbol{y}_t]^\top{\boldsymbol{\Sigma}_t^p}^{-1}\boldsymbol{y}_t,\\
    \boldsymbol{e}_t& = \boldsymbol{y}_t - \sum_{m=1}^M \boldsymbol{C}^m \boldsymbol{\psi}_t^m,\\
    \boldsymbol{\psi}_t^m &= \boldsymbol{G}^m(\boldsymbol{\psi}_{t-1}^m+{\boldsymbol{K}^m}\boldsymbol{e}_{t-1}),\\
    \sigma_{\mathcal{H}_1}^2 &= \sigma_{\mathcal{H}_0}^2, 
    \end{aligned}
    \label{kalman post change}
    \end{equation}
  where $\boldsymbol{K}^m$ is the Kalman gain of the $m$th LDS, $\boldsymbol{e}_t$ represents the observation error at a time $t$, $\boldsymbol{y}_t$ refers to the deterministic SOI , and $\boldsymbol{\psi}_t^m$ is the predicted latent variable at a time $t$.
  \label{lemmapostchange}
\end{lemma}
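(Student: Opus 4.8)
The plan is to exploit the linearity of the factorial Kalman forward filtering (FKFF) recursion in the observations. Under $\mathcal{H}_1$ the data fed to the filter are $\boldsymbol{p}_t = \boldsymbol{y}_t + \boldsymbol{n}_t$, so I would decompose every FKFF quantity into a \emph{stochastic, noise-driven} part, which behaves exactly as under $\mathcal{H}_0$ and has zero mean because the Kalman filter is unbiased on the zero-mean CBS, and a \emph{deterministic, signal-driven} part. Since~\eqref{kalman update} and~\eqref{kalman gain} are affine in the observations and the gains $\boldsymbol{K}^m$, the predictive covariances $\bar{\boldsymbol{\Sigma}}_t^m$, and hence $\boldsymbol{\Sigma}_t^p$ in~\eqref{FFKF conditional likelihood} depend only on the model parameters and not on the observed values, this decomposition is exact and the covariance structure is identical under $\mathcal{H}_0$ and $\mathcal{H}_1$.

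First I would isolate the signal-driven component of the predicted mean $\bar{\boldsymbol{\mu}}_t^m$, call it $\boldsymbol{\psi}_t^m$, and track it through one prediction--update cycle. Feeding the deterministic input $\boldsymbol{y}_t$ through the update equation produces the posterior signal bias $\boldsymbol{\psi}_t^m + \boldsymbol{K}^m \boldsymbol{e}_t$, with $\boldsymbol{e}_t = \boldsymbol{y}_t - \sum_{m=1}^M \boldsymbol{C}^m \boldsymbol{\psi}_t^m$ the expected innovation; propagating it forward by $\boldsymbol{G}^m$ yields exactly the recursion $\boldsymbol{\psi}_t^m = \boldsymbol{G}^m(\boldsymbol{\psi}_{t-1}^m + \boldsymbol{K}^m \boldsymbol{e}_{t-1})$ in~\eqref{kalman post change}. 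Consequently, under $\mathcal{H}_1$, $\mathbb{E}[\boldsymbol{p}_t - \boldsymbol{\mu}_t^p] = \boldsymbol{y}_t - \sum_{m=1}^M \boldsymbol{C}^m \boldsymbol{\psi}_t^m = \boldsymbol{e}_t$, while $\operatorname{Cov}(\boldsymbol{p}_t - \boldsymbol{\mu}_t^p) = \boldsymbol{\Sigma}_t^p$ as in $\mathcal{H}_0$. Here I would invoke Assumption~1 so that the filter has reached its stationary regime, making $\boldsymbol{K}^m$ and $\bar{\boldsymbol{\Sigma}}_t^m$ the steady-state quantities, and Theorem~\ref{FMAgap} to justify conditioning as if $\nu = 1$ so that the bias recursion starts cleanly at $t=1$.

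Finally, I would observe from~\eqref{LLR} that $\hat{L}_t$ is a fixed linear functional, with weight ${\boldsymbol{\Sigma}_t^p}^{-1}\boldsymbol{y}_t$, of the Gaussian vector $\boldsymbol{p}_t - \boldsymbol{\mu}_t^p - \tfrac{1}{2}\boldsymbol{y}_t$; hence $\hat{L}_t$ is Gaussian, and a one-line mean and variance computation gives $\mu_{\mathcal{H}_1,t} = [\boldsymbol{e}_t - \tfrac{1}{2}\boldsymbol{y}_t]^\top {\boldsymbol{\Sigma}_t^p}^{-1}\boldsymbol{y}_t$ and $\sigma_{\mathcal{H}_1}^2 = \boldsymbol{y}_t^\top {\boldsymbol{\Sigma}_t^p}^{-1}\boldsymbol{y}_t = \sigma_{\mathcal{H}_0}^2$, as claimed in~\eqref{kalman post change}. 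The main obstacle is the second step: justifying rigorously that the FKFF recursion splits cleanly into noise- and signal-driven parts \emph{with the covariance untouched} --- in particular handling the $\boldsymbol{x}_{t-1}^m$ term that appears in~\eqref{kalman update} and confirming that the gain~\eqref{kalman gain} is genuinely data-independent --- and then chaining this across all time indices back to $t=1$ via Theorem~\ref{FMAgap} so that $\boldsymbol{\psi}_t^m$ is well defined by the stated recursion.
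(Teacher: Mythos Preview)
Your proposal is correct and follows the natural line of argument: exploit that the FKFF mean recursion~\eqref{kalman update} is affine in the observations while the covariance recursion and the gain~\eqref{kalman gain} are data-free, split the predicted mean into a zero-mean noise-driven part and a deterministic signal-driven bias $\boldsymbol{\psi}_t^m$, identify the innovation mean under $\mathcal{H}_1$ as $\boldsymbol{e}_t$, and read off the mean and variance of the scalar $\hat L_t$ from~\eqref{LLR}. This is essentially the argument the paper carries out in its Supplemental Material, and your identification of the one delicate point---that the $\boldsymbol{x}_{t-1}^m$ appearing in~\eqref{kalman update} must be read as $\hat{\boldsymbol{\mu}}_{t-1}^m$ for the decomposition to go through cleanly---is apt.
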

\begin{proof}
    The derivation can be found in Section IV of the Supplemental Material.
\end{proof}
% \begin{corollary}
%     Under $\mathcal{H}_1$, the mean of the observation error $\boldsymbol{\mu}_t^e$ converges to $\boldsymbol{\mu}^*$ as $t\to\infty$.
% \end{corollary}
% \begin{proof}
%     The proof can be found in Appendix D.
% \end{proof}

\begin{figure}
    \centering
    \includegraphics[width=3.5in]{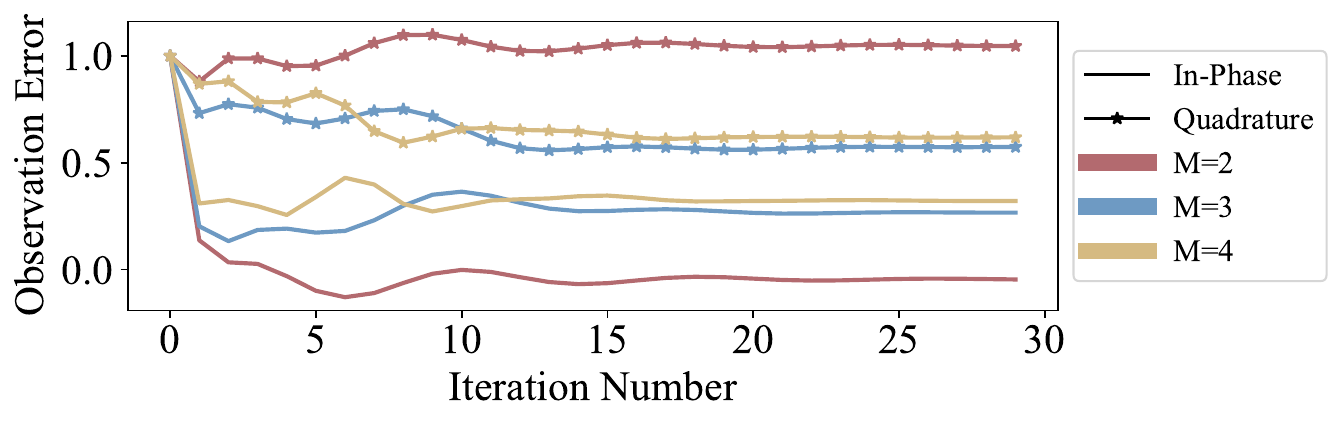}
    \caption{The observation error calculated by~\eqref{kalman post change}.}
    \label{mismatch convergence}
\end{figure}

In the Kalman filter, after sufficient iterations, the covariance matrix of the update variable $\hat{\boldsymbol{\Sigma}}_t$ converges to a certain value $\hat{\boldsymbol{\Sigma}}$. This convergence implies that the Kalman gain will also converge to a value $\boldsymbol{K}$. The covariance of the updated variable $\hat{\boldsymbol{\Sigma}}$ satisfies the algebraic Riccati equation~\cite{lancaster1995algebraic}. This study experimentally verifies that observation error $\boldsymbol{e}_t$ defined by~\eqref{kalman post change} converges to a certain value after a sufficient number of iterations. In the verification simulation tests, this study uses the same parameter settings as those described in Section V-A (see Page 9). As shown in Fig.~\ref{mismatch convergence}, the derived observation error $\boldsymbol{e}_t$ in the iteration function~\eqref{kalman post change} converges to a specific value. However, a theoretical investigation of the convergence of~\eqref{kalman post change} is beyond the scope of this paper and will not be presented in this work.

Given the above information, this study explicitly formulates the upper bound of the worst-case probability of missed detection and false alarm. Assuming $\tau$ is the stopping time under a threshold $h$; then, it holds that $P_{FA}(\tau,w_\alpha)<\tilde{\alpha}$. Next, let $\Phi(\cdot)$ be the cumulative function of the standard Gaussian distribution; then, it can be written that:
\begin{equation}
    \begin{aligned}
        h(\tilde{\alpha}) &= \sqrt{w\sigma_{\mathcal{H}_0}^2}\cdot\Phi^{-1}[(1-\tilde{\alpha})^{\frac{1}{w_\alpha}}]+w\mu_{\mathcal{H}_0},\\
        P_{MD}(\tau)&\leq \Phi\bigg(\frac{h-w\mu_{\mathcal{H}_1}}{\sqrt{w\sigma^2_{\mathcal{H}_1}}}\bigg),\\
        P_{FA}(\tau, w_\alpha) &\leq 1-\bigg[ \Phi\bigg(\frac{h-w\mu_{\mathcal{H}_1}}{\sqrt{w\sigma^2_{\mathcal{H}_1}}}\bigg)\bigg]^{w_\alpha},
    \end{aligned}
    \label{detection threshold}
\end{equation}
where $\mu_{\mathcal{H}_1}$ is the value of $\mu_{\mathcal{H}_{1,t}}$ when it converges. 

\section{Numerical Simulation Results}
This section first evaluates the effectiveness of the proposed IFLDS model and its parameter learning method. Then, without the loss of generality of the theoretical results, the Monte Carlo results for the FMA stopping time are compared with the theoretical result and baseline methods. The numerical simulation datasets and evaluation metrics are described in Section V-A. The simulation results and detailed discussions are presented in Sections V-B and V-C.
\subsection{Simulation Design}
\subsubsection{Data Description}
This study assumed there were $M$ background sources that transmit signals, each of which was represented by an LDS; the transition matrices $\boldsymbol{G}^m$ of the LDS and the corresponding output matrices for each background source were defined as follows:
\begin{equation}
    \boldsymbol{G}^m(\theta_m)=\epsilon_m \bigg[
    \begin{matrix}
       \cos(\theta_m) & -\sin(\theta_m)\\
        \sin(\theta_m) & \cos(\theta_m)
    \end{matrix}
\bigg],
\boldsymbol{C}^m = \bigg[
\begin{matrix}
    0.25 & -1.25\\
    -1 & -0.50
\end{matrix}
\bigg],
\end{equation}
\noindent where $\epsilon_m$ and $\theta_m$ are defined according to the assumption of four background sources as follows:
\begin{equation}
\begin{aligned}
    & \{(\epsilon_m,\theta_m)\}_{m=1}^4\\
 =& \{(0.95, 0),(0.9,\pi/2),(0.85, \pi/3),(0.75, \pi/6)\}.
\end{aligned}
\end{equation}
The covariance matrices $\boldsymbol{Q}$ and $\boldsymbol{R}$ are set as $0.01\boldsymbol{I}$.

The initial states of all four LDS were selected as $[0, 0]^\top$. In the RPL task, the CBS with a length of 2,000 was generated. In the TSD task, the settings of the CBS were the same as in the RPL task, except the SOI arrived at the 1,000th data point, lasted for 200 data points, and then disappeared, leaving only the CBS. The amplitude of the SOI $\boldsymbol{y}_t$ was a constant.

\subsubsection{Evaluation Metric}
The estimated number of background sources and the reconstruction errors were used in this study to evaluate the RPL task performance.

\textbf{Estimated Number $\hat{M}$}: The estimated number reflected the ability of the algorithm to estimate the number of background sources.

\textbf{Reconstruction Error $RE$}: The reconstruction error indicated the ability of the algorithm to estimate the model parameter $\{\boldsymbol{\Gamma}^m\}_{m=1}^M$, we define $ RE = \frac{1}{\bar{T}} \sum_{t=1}^{\bar{T}}|\boldsymbol{p}_t - \hat{\boldsymbol{p}}_t|^2$,
where $\boldsymbol{p}_t$ is the observed data sequence, and $\boldsymbol{\hat{p}}_t$ is the data sequence reconstruction using the estimated model parameters.

In the TSD task, the performance was evaluated using the probability of missed detection defined by~\eqref{inf_PMD} and the probability of false alarm defined by~\eqref{false alarm constraint}.

\subsubsection{Baseline Methods}
This study compared the performance of different approaches by designing the combination of the representation model for the RPL task and the stopping time for the TSD task.

In the RPL task, the comparison models were as follows:
\begin{enumerate}
    \item The MIBP was first proposed in~\cite{gael2008infinite}. We incorporate the MIBP prior to the proposed IFLDS to examine its ability to estimate the background source number;
    \item The HMM was used to represent the BS in~\cite{ford2021unknown}. The Baum-Welch algorithm~\cite{yang2017statistical} and the Viterbi algorithm \cite{viterbi1967error} were used to reconstruct the data;
    \item The LDS was used to represent the BS in~\cite{ford2019unknown}. The expectation-maximization (EM) algorithm~\cite{ghahramani1996parameter} and the Kalman filter \cite{khodarahmi2023review} were used to reconstruct the data.
\end{enumerate}

In the TSD task, the following four algorithms were used as the baseline methods in this work:
\begin{enumerate}
    \item The LDS-FMA used the LDS to represent the CBS, and employed the FMA stopping time; 
    \item The Gaussian-FMA used the Gaussian model to represent the CBS and employed the FMA stopping time; 
    \item The FLDS-CUSUM used the FLDS to represent the CBS and the CUSUM as a stopping time. The CUSUM stopping time~\cite{tartakovsky2014sequential} \cite{wald2004sequential} was defined as follows:
\begin{equation}
    \tau_{C}=\inf\{t\geq 1: \max_{1\leq k\leq t} \sum_{i=k}^t \hat{L}_i\geq h\};
    \label{CUSUM loss}
\end{equation}
    \item The FLDS-Shewhart used the FLDS to represent the CBS and adopted the Shewhart test as a stopping time. Shewhart test~\cite{moustakides2014multiple} was proven to be optimal for transient change detection when $w=1$; the Shewhart stopping time was defined by:
\begin{equation}
    \tau_{S} = \inf \{t\geq 1: \hat{L}_t \geq h\}.
\end{equation}
\end{enumerate}
\subsection{Functional Validation of RPL Task}
In this study, the RPL task was validated from two perspectives: the background source number estimation and the source parameter estimation. The evaluation metrics were computed on a per-dataset basis and averaged over $10^4$ Monte Carlo simulations. The number of background sources was gradually increased from two to four. In the simulations, the hyper-parameters was set to $\alpha = 1, \beta_0=2, \beta_1 = 0.1$, $\gamma_0=10, \gamma_1=1$, $\boldsymbol{M}_0=\boldsymbol{0}, \boldsymbol{K}_0 = \boldsymbol{I}, n_0=4, \boldsymbol{S}_0 = 0.75 \bar{\boldsymbol{S}}$, where $\bar{\boldsymbol{S}} = \frac{1}{M\bar{T}}\sum_{t=1}^{\bar{T}}(\boldsymbol{p}_t-\bar{\boldsymbol{p}})(\boldsymbol{p}_t-\bar{\boldsymbol{p}})^\top$, where $\bar{\boldsymbol{p}}$ is the average value of the observed CBS samples. The simulation results are shown in Fig.~\ref{RPL result}.
\begin{figure}
    \centering
    \includegraphics[width=3.5in]{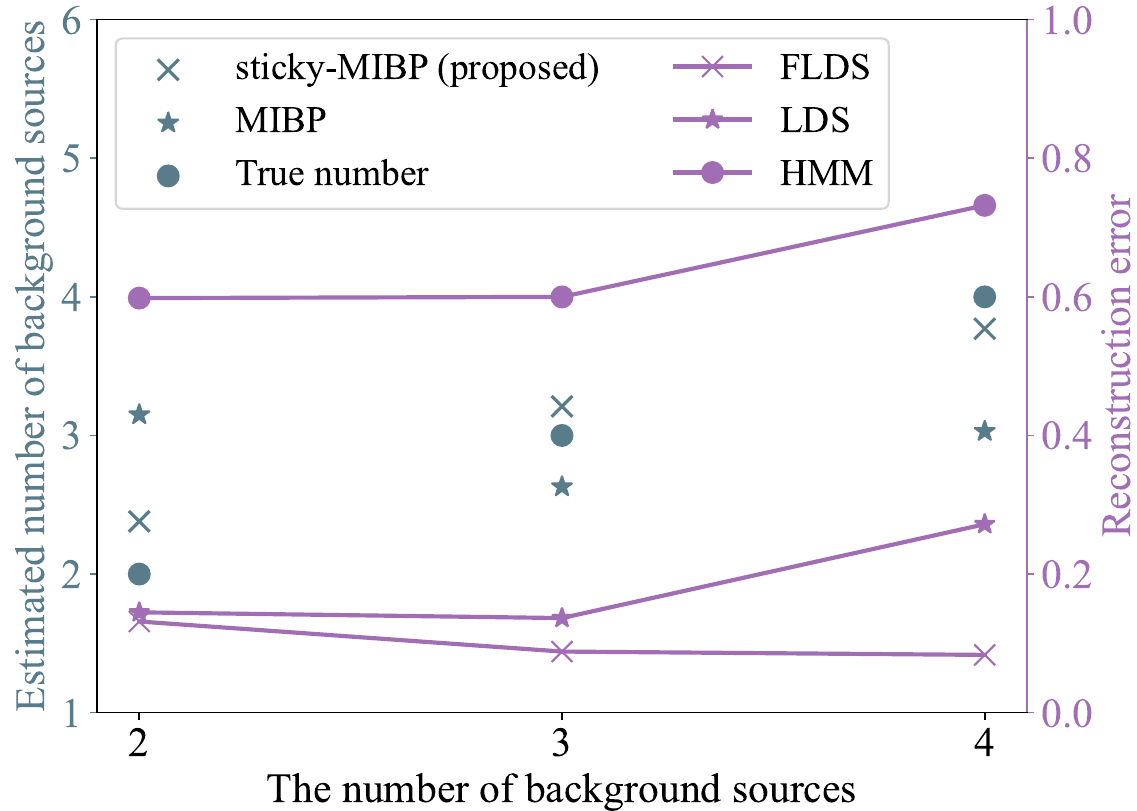}
    \caption{The estimation result of the background source number and the corresponding parameters.}
    \label{RPL result}
\end{figure}

In Fig.~\ref{RPL result}, the blue scatters represent the estimation result of the background source number. The results indicated that the proposed sticky-MIBP (blue forks) outperformed the conventional MIBP (blue stars) as the sticky-MIBP estimates were closer to the true background source number (blue points). The reason for this could be as follows. The conventional MIBP assumes each Markov chain $\boldsymbol{S}^m$ has a similar transition probability to other states, resulting in the frequently active-idle state transition. This circumstance is not suitable for the scenario when each source consistently exists.

The purple dotted line in Fig.~\ref{RPL result} illustrates the result of the data reconstruction of the estimated model parameter. As illustrated in Fig.~\ref{RPL result}, the HMM (five hidden states) demonstrated the weakest data representation capability among all methods, achieving the highest MSE value. When the number of background sources was two or three, the LDS and FLDS exhibited comparable performance, with an MSE of approximately 0.1. However, when the number of background sources increased to four, the performance of the LDS model decreased, resulting in a larger MSE value. The large reconstruction error in the LDS was primarily due to the limitation of the representation capability. 

In conclusion, by incorporating the sticky control, the proposed IFLDS can achieved more accurate number estimation results. The reconstruction error results validated the effectiveness and superiority of the proposed algorithm, especially in scenarios with multiple background sources.

\subsection{TSD Task Evaluation}
To verify the effectiveness of the TSD task, this study first calculated the probability of missed detection by varying the Signal to Interference and Noise Ratio (SINR) value. Then, the impact of a varying window length $w$ of the FMA test was examined. All simulations were based on the $10^4$ Monte Carlo trials. 
\subsubsection{Missed Probability Minimization Criterion Evaluation}
In this analysis, the SOI $\boldsymbol{y}_t$ consisted of 20 values equally spaced between 0.001 and 1.5, resulting in various SINRs. The probability of false alarms was $P_{FA} = 10^{-2}$ when $w_\alpha = 1000$, and the window length $w$ was set to 200. The detection thresholds of the baseline methods were determined by sweeping through different values to achieve the desired probability of false alarms. The detection results obtained for different background source numbers are shown in Fig.~\ref{Detection Performance}, where the \textit{x}-axis represents the SINR.
\begin{figure*}[htb]
	
	\begin{minipage}{0.33\linewidth}%可修改0.32为其他比例，调整大小
		\hspace*{-3pt}
        %这个图片路径替换成你的图片路径即可使用
		\centerline{\includegraphics[width=\textwidth]{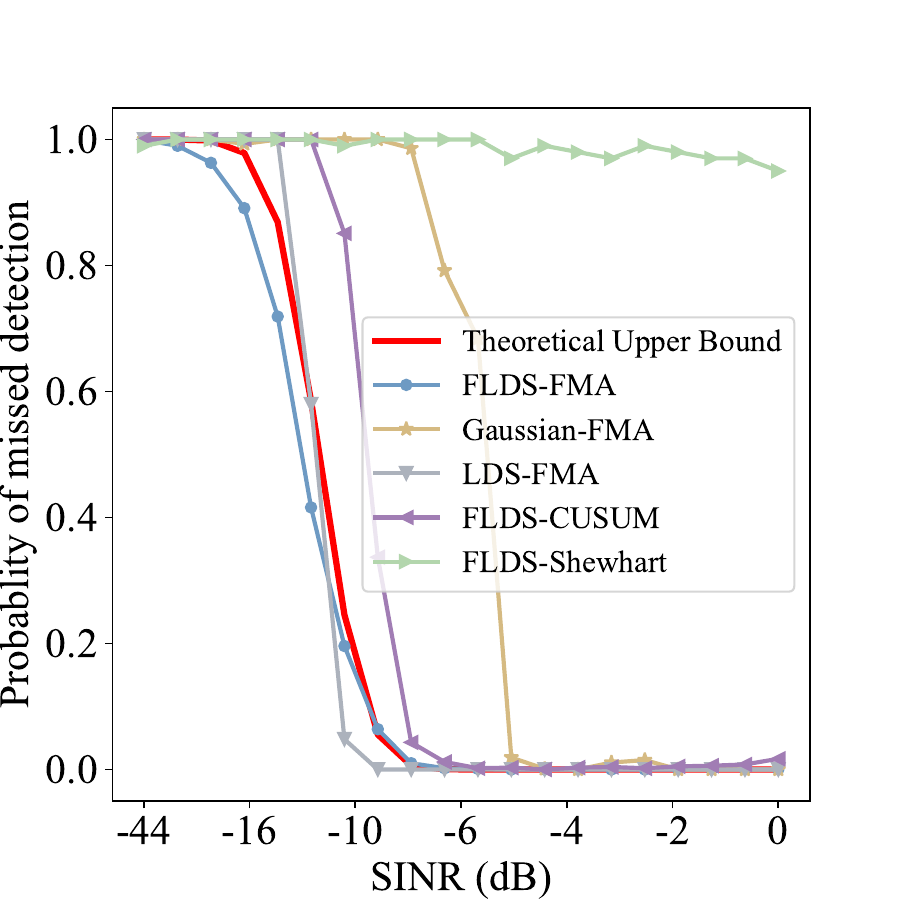}}
		\centerline{(a)}
	\end{minipage}
    % \hspace*{-14pt} % Decrease spacing between figures
	\begin{minipage}{0.33\linewidth}
		\hspace*{-3pt}
		\centerline{\includegraphics[width=\textwidth]{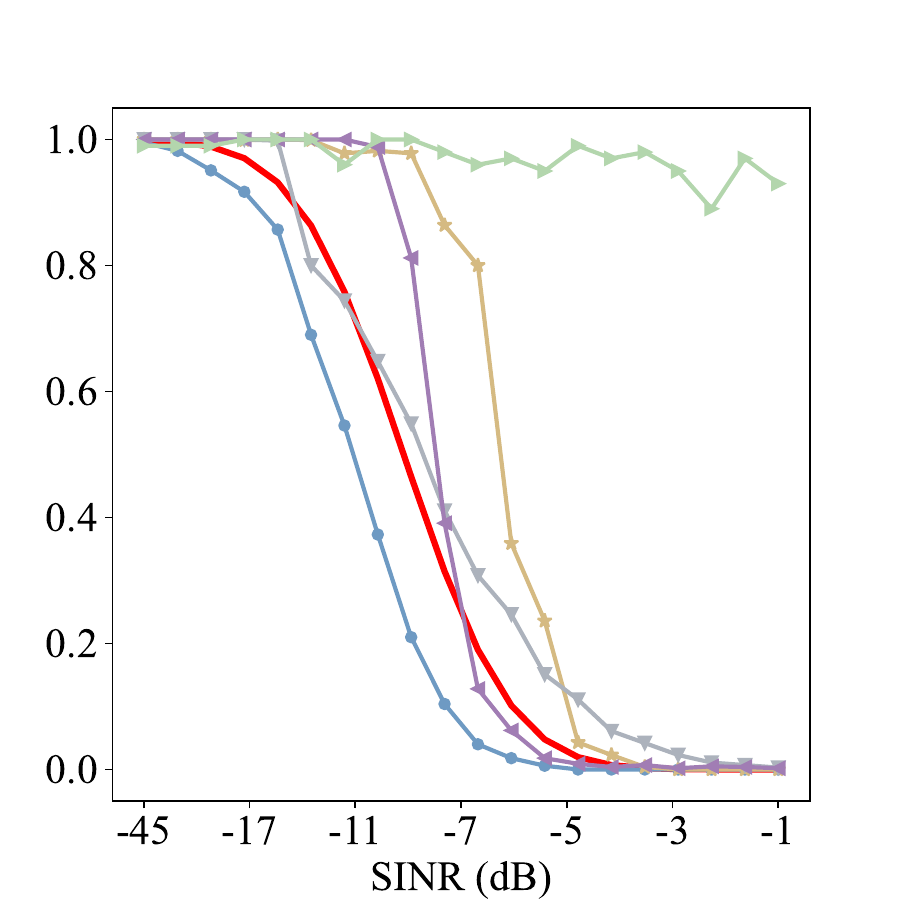}}
	 
		\centerline{(b)}
	\end{minipage}
    % \hspace*{-14pt} % Decrease spacing between figures
	\begin{minipage}{0.33\linewidth}
		\hspace*{-3pt}
		\centerline{\includegraphics[width=\textwidth]{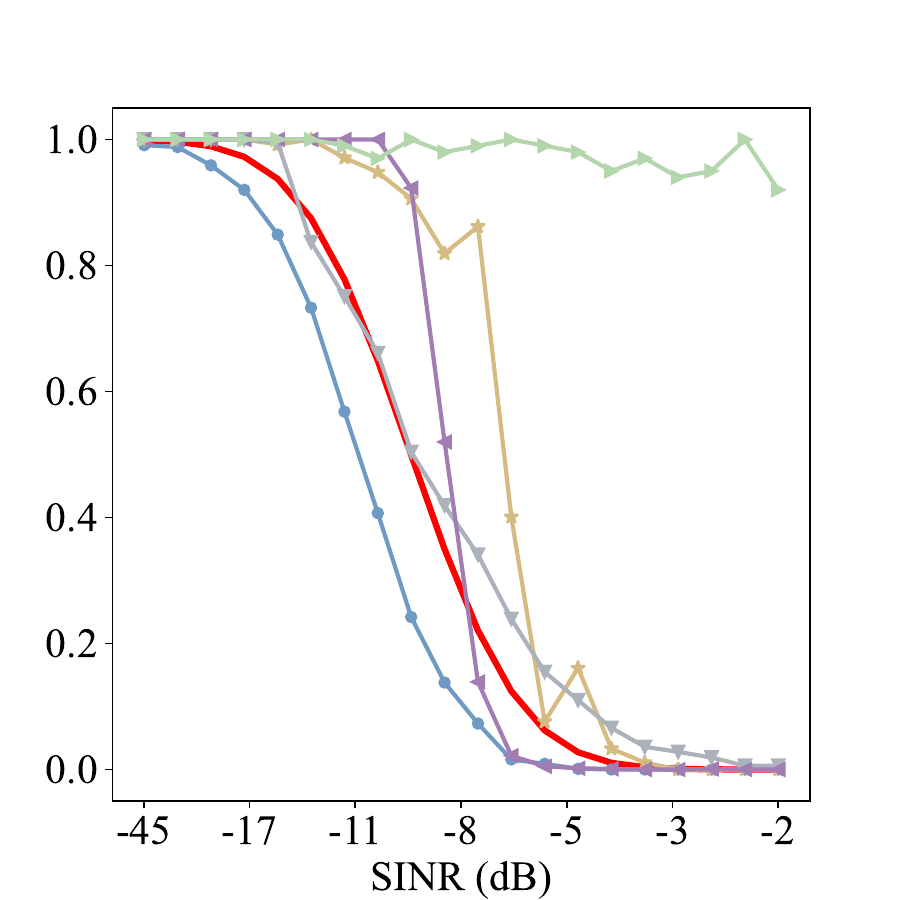}}
	 
		\centerline{(c)}
	\end{minipage}
    % \hspace*{-14pt} % Decrease spacing between figures
    % \begin{minipage}{0.26\linewidth}
	% 	\hspace*{-3pt}
	% 	\centerline{\includegraphics[width=\textwidth]{figure/Pmd-SNR-M5.pdf}}
	 
	% 	\centerline{(d)}
	% \end{minipage}
	\caption{Detection performance for different background source numbers. The red line in each figure represents the theoretical upper bound of the proposed FLDS-FMA test defined by~\eqref{detection threshold}; the \textit{x}-axis indicates the SINR, which was calculated empirically: (a) $M=2$; (b) $M=3$; (c) $M=4$.}
	\label{Detection Performance}
\end{figure*}

As shown in Fig.~\ref{Detection Performance}, the Shewhart test exhibited the weakest performance among all methods due to its strict assumptions. Particularly, the Shewhart test assumed a signal duration of one. The CUSUM-based algorithm accumulated all previous data to make decisions, which made it not sensitive to the SOI, yielding degraded performance. The Gaussian-FMA assumed that the observed data were i.i.d. and followed a Gaussian distribution, which contradicted the time-correlated characteristics. Consequently, the Gaussian-FMA achieved the second weakest performance, with non-monotonic fluctuations in its performance curve, as shown in Fig.~\ref{Detection Performance}~(c). In contrast, the FLDS-FMA demonstrated superior performance, achieving the lowest probability of missed detection across all SNR values. 
% with a 90\% probability of detection occurring between -9~dB and -10~dB when the background source number is from two to four.

Comparing the results in Figs.~\ref{Detection Performance}~(a)--\ref{Detection Performance}~(c), as the number of background sources increased, the performance gap between the FLDS-FMA method and the other methods increased with the number of background sources. Overall, the proposed method demonstrated exceptional performance, especially in scenarios with multiple background sources. 

\subsubsection{Window Length Influence Analysis}
Setting a window length $w$ is a challenging problem in practice. In this study, the influence of the window length value was evaluated, and the effectiveness of the proposed statistic was validated. Particularly, the window length $w$ was set to 10, 100, and 200, and $w_\alpha=1000$. The number of background sources was four, and the SOI amplitude was 0.48, which corresponded to the SINR value of -12.03~dB. The baseline methods were two FMA-based methods, the Gaussian-FMA method and the LDS-FMA method. Fig.~\ref{PMDvsPFAinWL} presents the $P_{MD}$ versus $P_{FA}$ curves of the proposed FLDS-FMA and the baseline methods. For the window length of 10, all methods showed comparable performance, with the FLDS-FMA slightly outperforming the other methods. As the window length increased, this small advantage of the FLDS-FMA accumulated, and its probability of missed detection diverged further from those of the other two algorithms, achieving the best performance. Meanwhile, the performance of the Gaussian-FMA algorithm remained stable, with a missed detection probability of approximately 90\% for $P_{FA}=0.1$. In addition, the non-monotonic behavior in the Gaussian-FMA curve indicated that the Gaussian model could not represent the CBS accurately.

Further, in Figs.~\ref{PMDvsPFAinWL}~(a)-\ref{PMDvsPFAinWL}~(c), there were gaps between the FLDS-FMA performance curve (blue dotted line) and the theoretical upper bound (red solid line). These gaps were small when the probability of a false alarm approached zero; however, when the probability of a false alarm increased, these gaps widened. To investigate the reasons behind this gap, this study plotted the theoretical and empirical density curves and detection statistics under both hypotheses, as shown in Fig.~\ref{StatandDensity}.

\begin{figure*}
    \centering
    \includegraphics[width=\linewidth]{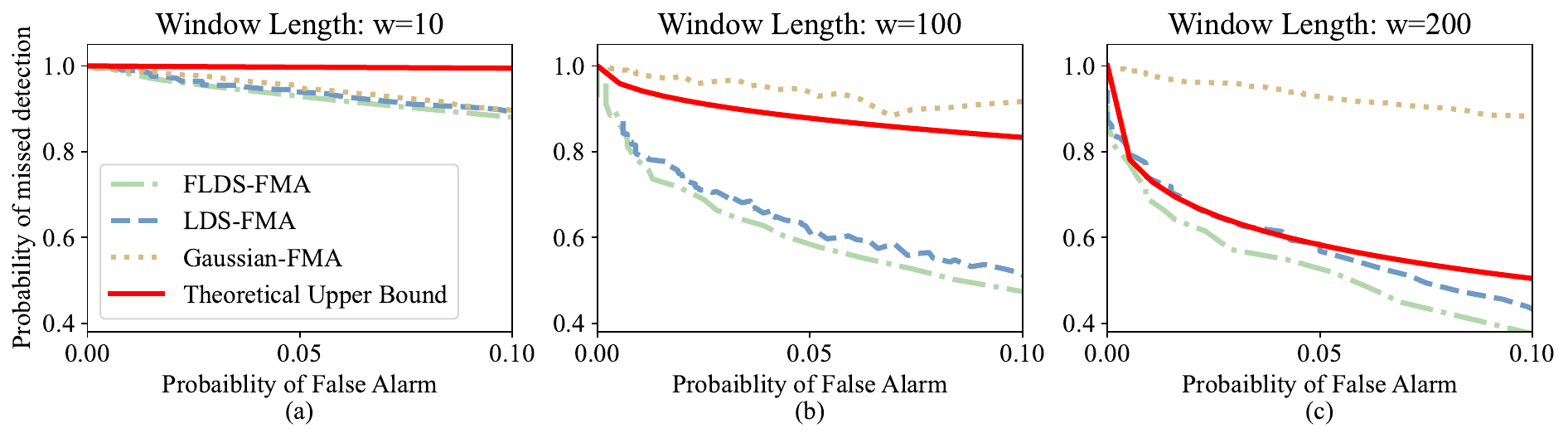}
    \caption{The $P_{MD}$ versus $P_{FA}$ curves of the FLDS-FMA method and baseline methods for various values of window length $w$. There were four background sources, and the SOI amplitude was $0.48$, which corresponded to -12.03~dB; (a): $w=10$, (b): $w=100$, (c): $w=200$.}
    \label{PMDvsPFAinWL}
\end{figure*}
Fig.~\ref{StatandDensity} summarizes the results of the proposed FLDS-FMA detection method defined by~\eqref{detection statistic} for a window length of $w=10,100,200$. The detection statistic (purple line) is presented in Fig.~\ref{StatandDensity}, where orange areas indicate signal absence and blue areas refer to the signal presence. The empirical distribution of the detection statistic obtained by the kernel density estimation (KDE) was compared with the theoretical distribution defined by~\eqref{PDF_H0}~and~\eqref{kalman post change}. The empirical distributions obtained by the KDE are indicated as solid lines in Fig.~\ref{StatandDensity}, and the theoretical distributions are represented by dashed lines.

The results in Fig.~\ref{StatandDensity} demonstrated that the empirical density (orange solid line) aligned well with the theoretical density (orange dashed line) under the null hypothesis. However, under the alternative hypothesis, the empirical density (blue solid line) did not match the theoretical counterpart (blue dashed line) as closely, showing a slight distortion of the density curve. This phenomenon could be observed for $w=100,200$. This distortion was primarily due to the cumulative phase in the detection statistic, where the statistic gradually increased when the signal was present (the blue region in Fig.~\ref{StatandDensity}). As the probability of a false alarm approached zero, the probability of missed detection tended to be one. However, as the probability of a false alarm began to increase, the reduction in the probability of missed detection was less than the ideal (theoretical) one, yielding a performance gap. In fact, one of the relaxations made when deriving the performance upper bound ignored the accumulation stage of the detection statistic, as shown in~(III.11) in the Supplemental Material. 

Thus, it could be concluded that as the window length increased, the distribution curves corresponding to the signal absent and its presence diverged, leading to a lower probability of missed detection. In addition,~\eqref{detection threshold} provided a viable approach for performance prediction, with signal duration incorporated into the performance calculation process.

\begin{figure*}
    \centering
    \includegraphics[width=\linewidth]{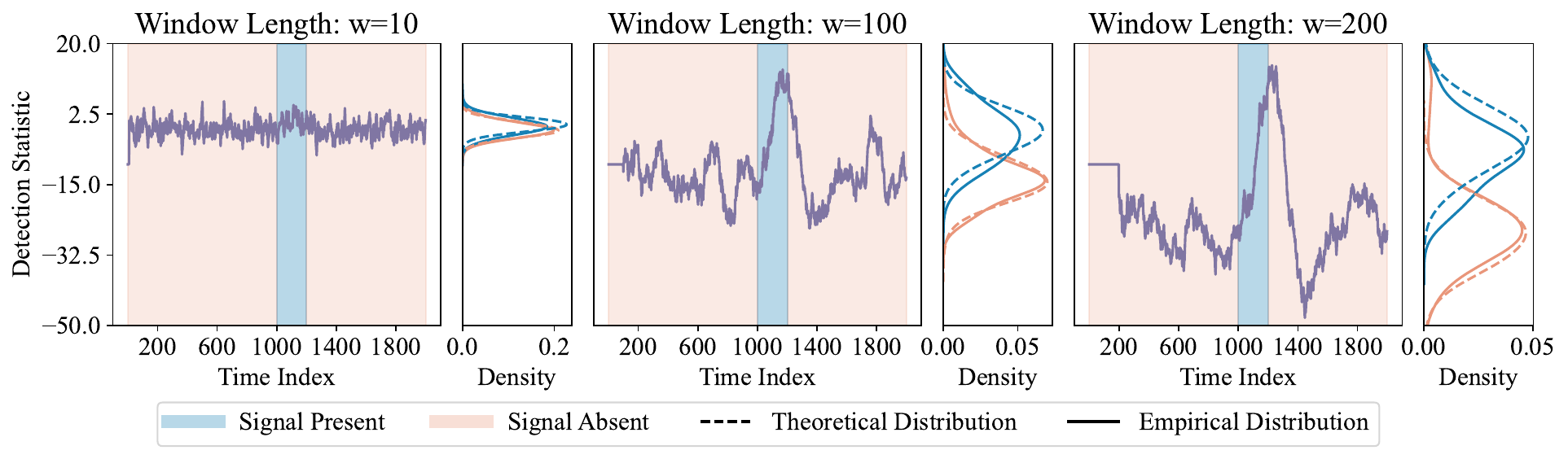}
    \caption{The distributions of the FLDS-FMA detection statistics for cases when the signal was absent and present. The theoretical distributions under the null hypothesis and alternative hypothesis are obtained by~\eqref{PDF_H0} (orange curve) and~\eqref{kalman post change} (blue curve).}
    \label{StatandDensity}
\end{figure*}
\section{Pulse Signal Detection Effect under Communication Interference}
In this section, the proposed FLDS-FMA method was compared with the LDS-FMA and Gaussia-FMA in the context of radar signal detection in the presence of communication interference. 
% \subsection{Data Description}

The CBS was the communication signal consisting of two communication sources: one using the binary phase shift keying (BPSK) modulation and the other employing the Quadrature Phase Shift Keying (QPSK)  modulation. SOI was emitted by radar. The pulse repetition frequency (PRF) was 3 kHz, the duty cycle was 8\%, and the pulse width (PW) of the SOI was 20~$\mu s$ (i.e., 200 samples). The ratio of the power of the SOI to the power of the communication interference signal and the environmental noise across the entire 5 MHz bandwidth was -10.02 dB, which is the SINR value. The Gaussian-FMA and LDS-FMA methods were used for performance comparison.

% \subsection{Experimental Results}
The parameters of the IFLDS, LDS, and Gaussian models were trained based on 4,000 signal samples. Specifically, the IFLDS was trained with 300 iterations using the proposed PGAS algorithm; the LDS was trained with 300 iterations using the EM algorithm, and the parameters of the Gaussian distribution were estimated using the maximum likelihood estimation. The detection performance was evaluated on 51,200 signal samples, as depicted in Fig.~\ref{COM}. The detection window length $w$ was set to 200, and the detection threshold $h$ was adjusted to maintain a false alarm probability of 15\%, which is indicated by the red line in the second to fourth panels of Fig.~\ref{COM}.

In the first and second panels of Fig.~\ref{COM}, the blue region indicates the area where the signal was present. The third to fifth panels of Fig.~\ref{COM} illustrate the detection statistics of the FLDS, LDS, and Gaussian-based detectors, respectively. The results indicated that the proposed FLDS-FMA algorithm outperformed than the baseline methods. In particular, the $P_D$ of the FLDS, LDS, and Gaussian-based detectors were 63\%, 41\%, and 33\% when $P_{FA}=15\%$, respectively. 

\begin{figure*}
    \centering
    \includegraphics[width=\linewidth]{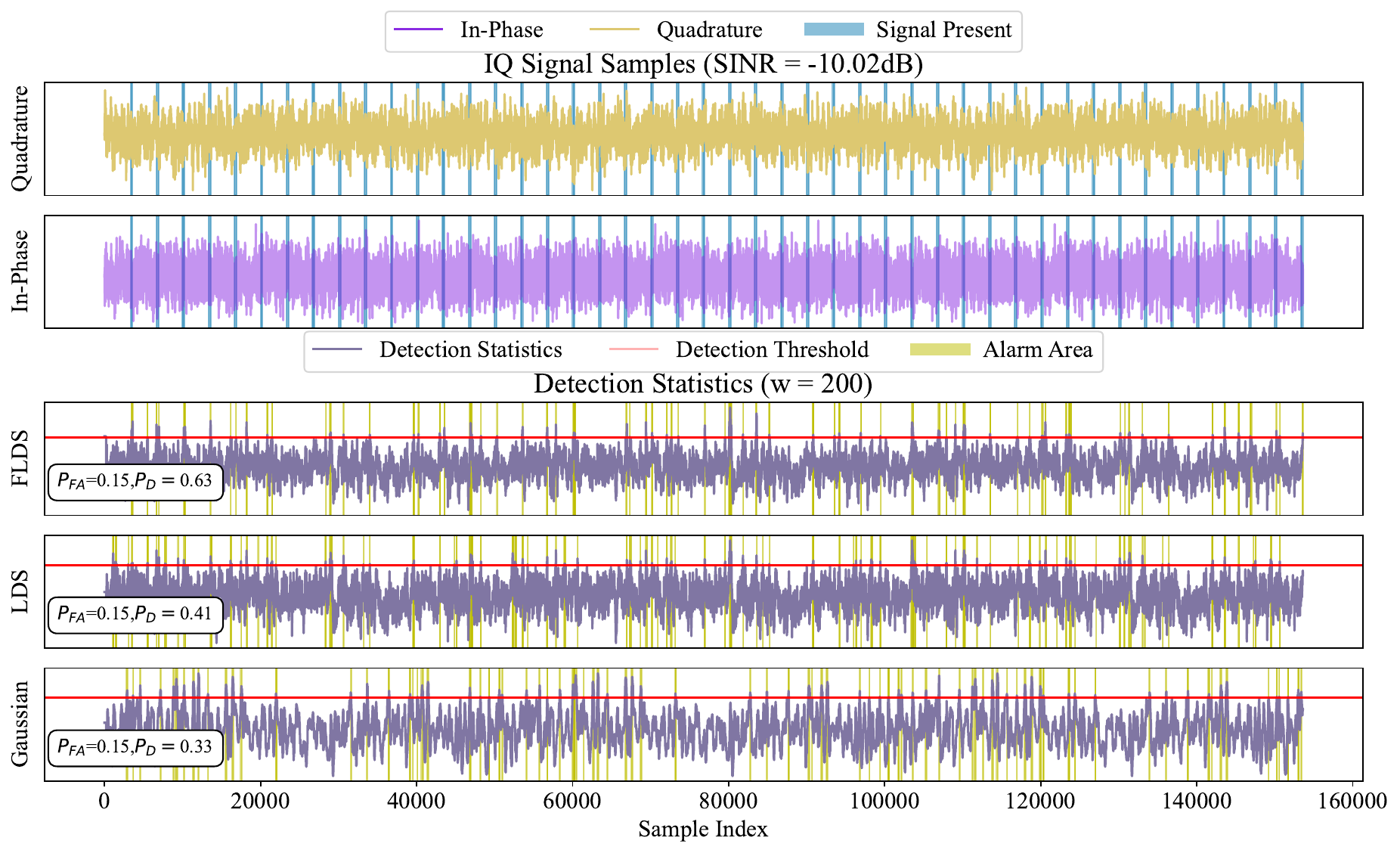}
    \caption{The IQ data (the first and second panels) and the detection statistic (the third to fifth panels); the third panel: FLDS-FMA statistic; the fourth panel: LDS-FMA statistic; the fifth panel: Gaussian-FMA statistic.}
    \label{COM}
\end{figure*}

\section{Conclusion}

This paper proposes a novel approach to detect transient signals under the background of IFLDS. First, the IFLDS is introduced to represent CBS. In addition, a sticky IMBP prior is designed to incorporate sticky control, which improves the model’s representation capability. Then, the IFLDS parameters are determined using the SS and PGAS, allowing for automatic estimation of the number of background sources. Further, an FMA stopping time is designed to detect transient signals, and its statistical performance is analyzed. To address the computational challenges in implementing the FMA stopping time, this study derives a closed-form recursive function for likelihood computation. Finally, a dependence structure between the underlying model is developed to compute the LLR feasibly. Numerical simulations and experiments verified the effectiveness and the superiority of the proposed method.

\label{sec:conclusion}

\bibliographystyle{IEEEtran}
% \begin{bibliographystyle}{IEEEtran}
% \begin{thebibliography}{1}
\bibliography{ref}

% Generated by IEEEtran.bst, version: 1.14 (2015/08/26)
\begin{thebibliography}{10}
\providecommand{\url}[1]{#1}
\csname url@samestyle\endcsname
\providecommand{\newblock}{\relax}
\providecommand{\bibinfo}[2]{#2}
\providecommand{\BIBentrySTDinterwordspacing}{\spaceskip=0pt\relax}
\providecommand{\BIBentryALTinterwordstretchfactor}{4}
\providecommand{\BIBentryALTinterwordspacing}{\spaceskip=\fontdimen2\font plus
\BIBentryALTinterwordstretchfactor\fontdimen3\font minus \fontdimen4\font\relax}
\providecommand{\BIBforeignlanguage}[2]{{%
\expandafter\ifx\csname l@#1\endcsname\relax
\typeout{** WARNING: IEEEtran.bst: No hyphenation pattern has been}%
\typeout{** loaded for the language `#1'. Using the pattern for}%
\typeout{** the default language instead.}%
\else
\language=\csname l@#1\endcsname
\fi
#2}}
\providecommand{\BIBdecl}{\relax}
\BIBdecl

\bibitem{urkowitz1967energy}
H.~Urkowitz, ``Energy detection of unknown deterministic signals,'' \emph{Proc. IEEE}, vol.~55, no.~4, pp. 523--531, 1967.

\bibitem{tartakovsky2014sequential}
A.~Tartakovsky, I.~Nikiforov, and M.~Basseville, \emph{Sequential analysis: Hypothesis testing and changepoint detection}.\hskip 1em plus 0.5em minus 0.4em\relax CRC press, 2014.

\bibitem{sadler1996detection}
B.~M. Sadler, ``Detection in correlated impulsive noise using fourth-order cumulants,'' \emph{IEEE Trans. Signal Process.}, vol.~44, no.~11, pp. 2793--2800, 1996.

\bibitem{yang2004memoryless}
X.~Yang, H.~V. Poor, and A.~P. Petropulu, ``Memoryless discrete-time signal detection in long-range dependent noise,'' \emph{IEEE Trans. Signal Process.}, vol.~52, no.~6, pp. 1607--1619, 2004.

\bibitem{Kay1993}
S.~M. Kay, \emph{Fundamentals of statistical signal processing: estimation theory}.\hskip 1em plus 0.5em minus 0.4em\relax Prentice-Hall, Inc., 1993.

\bibitem{poor1979memoryless}
H.~Poor and J.~Thomas, ``Memoryless discrete-time detection of a constant signal in m-dependent noise,'' \emph{IEEE Trans. Inf. Theory.}, vol.~25, no.~1, pp. 54--61, 1979.

\bibitem{ford2019unknown}
G.~Ford, T.~Hu, D.~Bucci, and B.~Foster, ``Unknown signal detection in linear dynamical system noise,'' in \emph{Proc. IEEE 29th Int. Workshop Mach. Learn. Signal Process.}\hskip 1em plus 0.5em minus 0.4em\relax IEEE, 2019, pp. 1--6.

\bibitem{ford2021unknown}
G.~Ford, B.~J. Foster, M.~J. Liston, and M.~Kam, ``Unknown signal detection in interference and noise using hidden markov models,'' in \emph{Proc. IEEE Stat. Signal Process. Workshop}.\hskip 1em plus 0.5em minus 0.4em\relax IEEE, 2021, pp. 406--410.

\bibitem{ford2023unknown}
G.~Ford, B.~J. Foster, S.~A. Braun, and M.~Kam, ``Unknown signal detection in switching linear dynamical system noise,'' \emph{IEEE Trans. Signal Process.}, vol.~71, pp. 2220--2234, 2023.

\bibitem{griffiths2011indian}
T.~L. Griffiths and Z.~Ghahramani, ``The indian buffet process: An introduction and review.'' \emph{J. Mach. Learn. Res.}, vol.~12, no.~4, 2011.

\bibitem{teh2007stick}
Y.~W. Teh, D.~Gr{\"u}r, and Z.~Ghahramani, ``Stick-breaking construction for the indian buffet process,'' in \emph{Proc. 11th Conf. A.I. Statist.}\hskip 1em plus 0.5em minus 0.4em\relax PMLR, 2007, pp. 556--563.

\bibitem{ghahramani1995factorial}
Z.~Ghahramani and M.~Jordan, ``Factorial hidden markov models,'' \emph{Adv. Neural Inf. Process. Syst.}, vol.~8, 1995.

\bibitem{gael2008infinite}
J.~Gael, Y.~Teh, and Z.~Ghahramani, ``The infinite factorial hidden markov model,'' \emph{Adv. Neural Inf. Process. Syst.}, vol.~21, 2008.

\bibitem{valera2015infinite}
I.~Valera, F.~Ruiz, L.~Svensson, and F.~Perez-Cruz, ``Infinite factorial dynamical model,'' \emph{Adv. Neural Inf. Process. Syst.}, vol.~28, 2015.

\bibitem{guepie2012sequential}
B.~K. Gu{\'e}pi{\'e}, L.~Fillatre, and I.~Nikiforov, ``Sequential detection of transient changes,'' \emph{Seq. Anal.}, vol.~31, no.~4, pp. 528--547, 2012.

\bibitem{tartakovsky2021optimal}
A.~G. Tartakovsky, N.~R. Berenkov, A.~E. Kolessa, and I.~V. Nikiforov, ``Optimal sequential detection of signals with unknown appearance and disappearance points in time,'' \emph{IEEE Trans. Signal Process.}, vol.~69, pp. 2653--2662, 2021.

\bibitem{bakhache2000reliable}
B.~Bakhache and I.~Nikiforov, ``Reliable detection of faults in measurement systems,'' \emph{Int. J. Adapt. Control Signal Process.}, vol.~14, no.~7, pp. 683--700, 2000.

\bibitem{fillatre2017security}
L.~Fillatre, I.~Nikiforov, P.~Willett \emph{et~al.}, ``Security of scada systems against cyber--physical attacks,'' \emph{IEEE Aerosp. Electron. Syst. Mag.}, vol.~32, no.~5, pp. 28--45, 2017.

\bibitem{zhang2023jdmr}
Z.~Zhang, M.~Zhu, Y.~Li, and S.~Wang, ``Jdmr-net: Joint detection and modulation recognition networks for lpi radar signals,'' \emph{IEEE Trans. Aerosp. Electron. Syst.}, vol.~59, no.~6, pp. 7575--7589, 2023.

\bibitem{sun2024quickest}
Z.~Sun and S.~Zou, ``Quickest change detection in autoregressive models,'' \emph{IEEE Trans. Inf. Theory.}, 2024.

\bibitem{Fuh2015}
C.-D. Fuh and Y.~Mei, ``Quickest change detection and kullback-leibler divergence for two-state hidden markov models,'' \emph{IEEE Trans. Signal Process.}, vol.~63, no.~18, pp. 4866--4878, 2015.

\bibitem{fuh2020asymptotically}
C.-D. Fuh, ``Asymptotically optimal change point detection for composite hypothesis in state space models,'' \emph{IEEE Trans. Inf. Theory.}, vol.~67, no.~1, pp. 485--505, 2020.

\bibitem{khodarahmi2023review}
M.~Khodarahmi and V.~Maihami, ``A review on kalman filter models,'' \emph{Arch. Comput. Methods Eng.}, vol.~30, no.~1, pp. 727--747, 2023.

\bibitem{Bishop2006}
C.~M. Bishop and N.~M. Nasrabadi, \emph{Pattern recognition and machine learning}.\hskip 1em plus 0.5em minus 0.4em\relax Springer, 2006, vol.~4, no.~4.

\bibitem{Fox2011}
E.~B. Fox, E.~B. Sudderth, M.~I. Jordan, and A.~S. Willsky, ``A sticky hdp-hmm with application to speaker diarization,'' \emph{Ann. Appl. Stat.}, pp. 1020--1056, 2011.

\bibitem{Du2010sticky}
L.~Du, M.~Chen, J.~Lucas, and L.~Carin, ``Sticky hidden markov modeling of comparative genomic hybridization,'' \emph{IEEE Trans. Signal Process.}, vol.~58, no.~10, pp. 5353--5368, 2010.

\bibitem{neal2003slice}
R.~M. Neal, ``Slice sampling,'' \emph{Ann. Stat.}, vol.~31, no.~3, pp. 705--767, 2003.

\bibitem{lindsten2014particle}
F.~Lindsten, M.~I. Jordan, and T.~B. Schon, ``Particle gibbs with ancestor sampling,'' \emph{J. Mach. Learn. Res.}, vol.~15, pp. 2145--2184, 2014.

\bibitem{ruiz2018infinite}
F.~J. Ruiz, I.~Valera, L.~Svensson, and F.~Perez-Cruz, ``Infinite factorial finite state machine for blind multiuser channel estimation,'' \emph{IEEE Trans. Cogn. Commun. Netw.}, vol.~4, no.~2, pp. 177--191, 2018.

\bibitem{gauraha2020note}
N.~Gauraha, ``A note on particle gibbs method and its extensions and variants,'' \emph{arXiv preprint arXiv:2007.15862}, 2020.

\bibitem{lai1998information}
T.~L. Lai, ``Information bounds and quick detection of parameter changes in stochastic systems,'' \emph{IEEE Trans. Inf. Theory.}, vol.~44, no.~7, pp. 2917--2929, 1998.

\bibitem{egea2018performance}
D.~Egea-Roca, J.~A. L{\'o}pez-Salcedo, G.~Seco-Granados, and H.~V. Poor, ``Performance bounds for finite moving average tests in transient change detection,'' \emph{IEEE Trans. Signal Process.}, vol.~66, no.~6, pp. 1594--1606, 2018.

\bibitem{egea2022two}
D.~Egea-Roca, B.~K. Gu{\'e}pi{\'e}, J.~A. L{\'o}pez-Salcedo, G.~Seco-Granados, and I.~V. Nikiforov, ``Two strategies in transient change detection,'' \emph{IEEE Trans. Signal Process.}, vol.~70, pp. 1418--1433, 2022.

\bibitem{chopin2020introduction}
N.~Chopin, O.~Papaspiliopoulos \emph{et~al.}, \emph{An introduction to sequential Monte Carlo}.\hskip 1em plus 0.5em minus 0.4em\relax Springer, 2020, vol.~4.

\bibitem{koller2009probabilistic}
D.~Koller and N.~Friedman, \emph{Probabilistic graphical models: principles and techniques}.\hskip 1em plus 0.5em minus 0.4em\relax MIT press, 2009.

\bibitem{lancaster1995algebraic}
P.~Lancaster, \emph{Algebraic Riccati Equations}.\hskip 1em plus 0.5em minus 0.4em\relax Oxford Science Publications/The Clarendon Press, Oxford University Press, 1995.

\bibitem{yang2017statistical}
F.~Yang, S.~Balakrishnan, and M.~J. Wainwright, ``Statistical and computational guarantees for the baum-welch algorithm,'' \emph{J. Mach. Learn. Res.}, vol.~18, no. 125, pp. 1--53, 2017.

\bibitem{viterbi1967error}
A.~Viterbi, ``Error bounds for convolutional codes and an asymptotically optimum decoding algorithm,'' \emph{IEEE Trans. Inf. Theory.}, vol.~13, no.~2, pp. 260--269, 1967.

\bibitem{ghahramani1996parameter}
Z.~Ghahramani and G.~E. Hinton, ``Parameter estimation for linear dynamical systems,'' \emph{Tech. Rep.}, 1996.

\bibitem{wald2004sequential}
A.~Wald, \emph{Sequential Anal.}\hskip 1em plus 0.5em minus 0.4em\relax Courier Corporation, 2004.

\bibitem{moustakides2014multiple}
G.~V. Moustakides, ``Multiple optimality properties of the shewhart test,'' \emph{Sequential Anal.}, vol.~33, no.~3, pp. 318--344, 2014.

\end{thebibliography}

% \end{bibliographystyle}

% \begin{IEEEbiographynophoto}{Jane Doe}
% Biography text here without a photo.
% \end{IEEEbiographynophoto}

% \begin{IEEEbiography}[{\includegraphics[width=1in,height=1.25in,clip,keepaspectratio]{fig1.png}}]{IEEE Publications Technology Team}
% In this paragraph you can place your educational, professional background and research and other interests.\end{IEEEbiography}

\end{document}